\newcommand{\mcal}{\mathcal}
\newcommand{\msf}{\mathsf}
\newtheorem{thm}{Theorem}
\newtheorem{lem}[thm]{Lemma}
\newtheorem{rem}{Remark}
\newtheorem{defin}{Definition}
\newtheorem*{defin*}{Definition}
\newtheorem{prope}{Property}
\algnewcommand\algorithmicforeach{\textbf{for each}}
\begin{document}

\title{ Efficiently Finding Simple Schedules in Gaussian Half-Duplex Relay Line Networks} 
\author{
\IEEEauthorblockN{Yahya H. Ezzeldin$^\dagger$, Martina Cardone$^{\dagger}$, Christina Fragouli$^{\dagger}$, Daniela Tuninetti$^*$}
$^{\dagger}$ UCLA, Los Angeles, CA 90095, USA,
Email: \{yahya.ezzeldin, martina.cardone, christina.fragouli\}@ucla.edu\\
$^*$ University of Illinois at Chicago,
Chicago, IL 60607, USA, 
Email: danielat@uic.edu
}
\IEEEoverridecommandlockouts
\maketitle

\begin{abstract}
The problem of operating a Gaussian Half-Duplex (HD) relay network optimally is challenging due to the exponential number of listen/transmit network states that need to be considered.
Recent results have shown that, for the class of Gaussian HD networks with $N$ relays, there always exists a {\it simple} schedule, i.e., with at most $N+1$ active states, that is sufficient for approximate (i.e., up to a constant gap) capacity characterization.
This paper investigates how to efficiently find such a simple schedule over 
line networks.
Towards this end, a polynomial-time algorithm is designed and proved to output a simple schedule that achieves the approximate capacity.
The key ingredient of the algorithm is to leverage similarities between network states in HD and edge coloring in a graph.
It is also shown that the algorithm allows to derive a closed-form expression for the approximate capacity of the Gaussian line network that can be evaluated distributively and in linear time.
Additionally, it is shown using this closed-form that the problem of Half-Duplex routing is NP-Hard.
\end{abstract}
\section{Introduction}
Computing the capacity of a wireless relay network is a long-standing open problem. 
For Half-Duplex (HD) networks, where the $N$ relays cannot simultaneously transmit and receive, 
such problem is more challenging due to the $2^N$ possible listen/transmit configuration states that need to be considered.
Recently, in~\cite{CardoneITW2015} the authors proved the conjecture posed in~\cite{BrahmaIT2016}, which states that {\it simple} schedules (i.e., with at most $N+1$ active states) suffice for approximate (i.e., up to a constant gap) capacity characterization.
This result is promising as it implies that the network can be operated close to its capacity with a limited number of state switches.
However, to the best of our knowledge, it is not clear yet if such simple schedules 
can be found efficiently.

The main result of this work is an algorithm design that allows to compute a
simple schedule that achieves the approximate capacity of the $N$-relay Gaussian HD line network with complexity $O(N^2)$.
The algorithm leverages similarities between network states in HD and edge coloring in a graph, by associating different colors to links that cannot be activated simultaneously.
In addition, the algorithm allows to derive the approximate capacity of the Gaussian HD line network in closed form. 
This expression has two appealing features: (i) it can be evaluated in linear time and (ii) it can be distributively computed among the $N$ relays.
The novelty and applicability of the results presented in this paper are two-fold:
(i) they shed light on how to operate a class of Gaussian HD relay networks close to the capacity with the minimum number of state switches and (ii) they represent the first approximate capacity characterization in closed form for a class of Gaussian HD relay networks with general number of relays.

%
\smallskip

\noindent{\bf{Related Work.}} 
The capacity of the $N$-relay Gaussian HD network is not known in general. 
Recent results in~\cite{AvestimehrIT2011,OzgurIT2013} showed
that the capacity can be approximated to within a constant gap by the cut-set upper bound evaluated with independent inputs and a schedule, which is independent of the transmitted and received signals. 
In the rest of the paper, we refer to this bound as the {\it approximate capacity}.
To the best of our knowledge, the tightest known gap for Gaussian HD relay networks is of $1.96 (N+2)$ bits per channel use, independently of the channel parameters~\cite{CardoneIT2014}.

In general, the evaluation of the approximate capacity is cast as an optimization problem over
$2^N$ listen/transmit states.
As $N$ increases, this evaluation, as well as determining an optimal schedule of listen/transmit states, become computationally expensive. 
The authors in~\cite{OngIT2012} designed an iterative algorithm to determine an  approximately optimal schedule when the relays use decode-and-forward.
In~\cite{EtkinIT2014}, the authors proposed a `grouping' technique to address the complexity of the aforementioned optimization problem. This technique allows to compute the approximate capacity in polynomial-time for certain classes of  Gaussian HD relay networks that include the line network as special case.
While the results in~\cite{OngIT2012} and~\cite{EtkinIT2014} show that the approximate capacity can be efficiently obtained 
for special network topologies by solving a linear program, it is not clear how to construct (in polynomial time) a schedule that achieves the approximate capacity.
Differently, in this work we design a polynomial-time algorithm that outputs a simple schedule, which achieves the approximate capacity of Gaussian HD line networks and allows to compute this quantity in closed form.


\smallskip

\noindent{\bf{Paper Organization.}} 
Section~\ref{sec:model} describes the $N$-relay Gaussian HD line network and presents known capacity results.
Section~\ref{sec:mainresults} discusses our main results and their implications.
Section~\ref{sec:cuts} simplifies the approximate capacity expression for Gaussian HD line networks.
Section~\ref{sec:algo} designs an algorithm for finding a simple schedule for a Gaussian HD line network that achieves the approximate capacity.
Section~\ref{sec:conc} concludes the paper. Some of the proofs 
are delegated to the Appendix.
Particularly, Appendix~\ref{HD_NP_hard} proves that the problem of find the best Half-Duplex route in a network is NP-Hard.

\section{System Model}
\label{sec:model}
We consider the $N$-relay Gaussian HD line network $\mcal{L}$
where a source node (node $0$) wishes to communicate to a destination node (node $N+1$) through a route of $N$ relays where each relay is operating in HD.
The input/output relationship for the line network $\mcal{L}$
is
\begin{align}
    Y_i & = \left (1-S_i \right ) h_{i,i-1} X_{i-1} S_{i-1} + Z_i, \ \forall i \in [1:N+1],
\label{eq:inputoutput}
\end{align}
where: 
(i) $X_i$ (respectively, $Y_i$) denotes the channel input (respectively, output) at the $i$-th node; 
(ii) $S_i$ is the binary random variable which represents the state 
of node $i$, i.e., if $S_i=0$ then node $i$ is receiving, while if $S_i=1$ then node $i$ is transmitting;
notice that $S_0=1$ (i.e., the source always transmits) 
and $S_{N+1}=0$ (i.e., the destination always receives);
(iii) $Z_i$ indicates the additive white Gaussian noise at node $i$, where the noises are assumed to be independent and identically distributed as $\mcal{CN}(0,1)$; (iv) $h_{i,j}$ denotes the complex channel coefficient from node $j$ to node $i$ and $h_{i,j}=0$ whenever $j \neq i-1$; the channel gains are assumed to be constant for the whole transmission duration and hence known to all nodes;
(v) the channel inputs satisfy an average power constraint $\mathbb{E}[|X_i|^2]\leq 1,\ \forall i \in [0:N]$. 
We denote
the point-to-point link capacity from node $i-1$ to node $i$ with
\begin{align*}
    \ell_i = \log\left(1+|h_{i,i-1}|^2 \right),\quad \forall i \in [1:N+1].
\end{align*}
The capacity of the Gaussian HD line network $\mcal{L}$ described in \eqref{eq:inputoutput}
can be approximated to within a constant gap $G=O(N)$~\cite{AvestimehrIT2011,OzgurIT2013,LimIT2011,KramerAllerton2004}\footnote{In~\cite{KramerAllerton2004}, it was observed that information can be conveyed by randomly switching the relay between transmit and receive modes. However, this only improves the capacity by a constant, at most $1$ bit per relay.
}
%
\begin{align}
\label{eq:ApproxCap_orig}
{\mathsf{C}}_\mcal{L}= \max_{\lambda \in \Lambda} \min_{\mathcal{A} \subseteq [1:N]} \sum_{s \in [0:1]^N} \lambda_s 
\sum_{\substack{i \in  \{N+1\} \cup \left \{\mathcal{R}_s \cap \mathcal{A} \right \}  \\ i-1 \in \{0\} \cup \left \{ \mathcal{T}_s \cap \mathcal{A}^c \right \} } } \ell_{i},
\end{align}
where: 
(i) the schedule $\lambda \in \mathbb{R}^{2^N}$ determines the fraction of time the network operates in each of the states $s \in [0{:}1]^N$, i.e., $\lambda_s = \Pr \left (S_i = s_i, \forall i \in [1:N]  \right )$;
(ii) $\Lambda = \{ \lambda : \lambda \in \mathbb{R}^{2^N},\ \lambda \geq 0,\ \|\lambda\|_1 = 1 \}$ is the set of all possible schedules; 
(iii) $\mathcal{R}_s$ (respectively, $\mathcal{T}_s$) represents the set of indices of relays receiving (respectively, transmitting) in the relaying state $s \in [0{:}1]^N$; 
(iv) $\mathcal{A}^c = [1{:}N]\backslash \mathcal{A}$.
%

\noindent We can equivalently write
the expression in \eqref{eq:ApproxCap_orig} as
\begin{align}
\label{eq:ApproxCap}
{{\mathsf{C}}_\mcal{L}}= \max_{\lambda \in \Lambda} \min_{\mathcal{A} \subseteq [1:N]} \sum_{s \in [0:1]^N} \lambda_s 
\sum_{\substack{i \in  \{N+1\} \cup\mathcal{A}  \\ i-1 \in \{0\} \cup \mathcal{A}^c  } } \ell_{i,s}',
\end{align}
\begin{align}
\label{eq:ChannelState}
\hspace{-1em}\text{where} \ \ell_{i,s}' {=}
\begin{cases}
    \ell_{i,} & \text{if} \ i \!\in \! \mathcal{R}_s {\cup} \{N{+}1\} \ \text{and} \ i{-}1 \! \in \! \mathcal{T}_s\cup\{0\}\\
    0, & \text{otherwise.}
\end{cases}
\end{align}
Similarly, we denote with ${\mathsf{C}}^{\lambda}_\mcal{L}$, the HD rate achieved by the line network $\mcal{L}$ when operated with the fixed schedule $\lambda$, i.e.,
\begin{align}
\label{eq:ApproxCap_fixedschedule}
{{\mathsf{C}}^{\lambda}_\mcal{L}}= \min_{\mathcal{A} \subseteq [1:N]} \sum_{s \in [0:1]^N} \lambda_s 
\sum_{\substack{i \in  \{N+1\} \cup\mathcal{A}  \\ i-1 \in \{0\} \cup \mathcal{A}^c  } } \ell_{i,s}'.
\end{align}
Note that for all possible schedules $\lambda$, $\mathsf{C}^{\lambda}_\mathcal{L} \leq \mathsf{C}_\mathcal{L}$.

\begin{defin}{\rm 
    We say that a schedule $\lambda$ is \emph{simple} if the number of active states (i.e., states $s$ such that $\lambda_s\neq 0$) is less than or equal to $N+1$. 
In other words, we have
$\| \lambda \|_0 \leq N+1$.
}
\end{defin}

\section{Main Results and Discussion}
\label{sec:mainresults}
Our main result, stated in the theorem below, is two-fold:
(i) it designs a polynomial-time algorithm that outputs a simple schedule optimal for approximate capacity and
(ii) it provides a closed-form expression for the approximate capacity of the HD line network that can be evaluated in linear time.
\begin{thm}
\label{thm:closed_form}
For the $N$-relay Gaussian HD line network $\mcal{L}$ described in~\eqref{eq:inputoutput}, 
a simple schedule optimal for approximate capacity
can be obtained in $O(N^2)$ time and the approximate capacity $\msf{C}_\mcal{L}$ in~\eqref{eq:ApproxCap} is given by
    \begin{align}
\label{eq:closed_form}
        {\msf{C}_\mcal{L}} = \min_{i \in [1:N]}\left\{\frac{\ell_i\ \ell_{i+1}}{ \ell_i+\ell_{i+1}} \right\}.
    \end{align}
\end{thm}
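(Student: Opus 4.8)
The plan is to proceed in two stages: first simplify the min-cut optimization in~\eqref{eq:ApproxCap} into a more tractable combinatorial form, and then solve the resulting max-min problem explicitly, showing simultaneously that the optimizer is a simple schedule computable in $O(N^2)$ time. For the first stage, I would observe that in a line network a cut $\mathcal{A} \subseteq [1{:}N]$ contributes, for a given state $s$, only the links $\ell_i$ that ``straddle'' the cut boundary in the correct direction \emph{and} are simultaneously usable in state $s$ (i.e.\ node $i-1$ transmits and node $i$ receives). Because the network is a line, a cut is essentially a set of boundary positions between consecutive nodes, and the value $\sum_s \lambda_s \ell_{i,s}'$ associated with boundary $i$ is the total time that link $i$ is ``on'' under $\lambda$ — call it $x_i$. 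The reduction I expect (and which Section~\ref{sec:cuts} presumably carries out) is that $\msf{C}_\mcal{L}^\lambda$ equals the minimum over all cuts of a sum of such $x_i$'s, and the binding cuts turn out to be the ``local'' ones involving just one or two adjacent links, so that $\msf{C}_\mcal{L} = \max_{\lambda}\min_i (x_i + x_{i+1})$ subject to the half-duplex scheduling constraints on the $x_i$'s.

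The key structural constraint is the half-duplex one: relay $i$ cannot transmit (feeding link $i{+}1$) and receive (using link $i$) at the same time, so in every state $s$ at most one of links $i$ and $i{+}1$ is active; summing over the schedule this gives $x_i + x_{i+1} \le 1$ for every $i \in [1{:}N]$ (with the convention that the source and destination are always on, so links $1$ and $N{+}1$ are only constrained through their neighbors). Conversely, any vector $(x_1,\dots,x_{N+1})$ satisfying these pairwise constraints should be realizable by an actual schedule — and this is exactly where the edge-coloring analogy enters: activating links corresponds to choosing, at each time instant, an independent set in the ``conflict path'' on the links, i.e.\ a matching-like / proper-coloring structure, and a fractional assignment with $x_i + x_{i+1}\le 1$ decomposes into at most $N+1$ such independent sets (states) by a greedy/interval-coloring argument. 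I would isolate this realizability-plus-sparsity claim as the main lemma, since it is what delivers both ``simple'' and the $O(N^2)$ running time.

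Given that reduction, the optimization becomes: maximize $t$ such that $x_i + x_{i+1} \ge t$ for all $i$, while $x_i + x_{i+1} \le 1$ and $x_i \ge 0$. Each link $\ell_i$ also caps the useful rate it can carry, but more precisely the cut through boundary $i$ alone contributes $x_i \ell_i$-type terms — here I would be careful: the actual objective is $\min_i (x_i \ell_i \text{-weighted combination})$, and balancing link $i$ against link $i{+}1$ at the shared relay $i$ forces, at the optimum, $x_i \ell_i = x_{i+1}\ell_{i+1}$ together with $x_i + x_{i+1} = 1$ for the binding pair. Solving these two equations gives $x_i = \ell_{i+1}/(\ell_i+\ell_{i+1})$ and the common value $\ell_i \ell_{i+1}/(\ell_i + \ell_{i+1})$, and taking the minimum over all adjacent pairs $i \in [1{:}N]$ yields exactly~\eqref{eq:closed_form}. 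The matching upper bound comes from exhibiting, for the minimizing index $i^\star$, the cut that picks out precisely links $i^\star$ and $i^\star{+}1$ and noting no schedule can push both above the stated harmonic-mean value simultaneously because of $x_{i^\star} + x_{i^\star+1}\le 1$.

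The main obstacle, I expect, is the realizability/sparsity lemma — proving that the pairwise constraints $x_i + x_{i+1}\le 1$ are not merely necessary but \emph{sufficient} for a schedule to exist, and moreover that one can always take it with at most $N{+}1$ active states and build it in quadratic time. This is where the precise correspondence with edge coloring (or fractional interval scheduling) must be nailed down; everything after it is a short linear-programming / duality computation. A secondary subtlety is handling the boundary links $\ell_1$ and $\ell_{N+1}$ correctly, since the source and destination are not relays and do not impose their own conflict constraints — one must check that the min in~\eqref{eq:closed_form} legitimately ranges only over $i \in [1{:}N]$ and that the endpoints do not create an additional, tighter bound.
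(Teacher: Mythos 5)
Your proposal is correct in substance and lands on essentially the same mechanism as the paper: reduce the cut minimization to the per-link activation fractions $x_i=\sum_{s\in\mcal{S}_i}\lambda_s$ (the paper's Section~\ref{sec:cuts} does exactly this, giving $\msf{C}^\lambda_\mcal{L}=\min_i x_i\ell_i$ rather than the $\min_i(x_i+x_{i+1})$ you first write down, though you self-correct), impose the HD conflict constraints $x_i+x_{i+1}\le 1$, solve the resulting LP to get the harmonic-mean value, and realize the optimal $x_i$'s by an interval/edge-coloring construction with breakpoints yielding at most $N+1$ states. The "realizability-plus-sparsity lemma" you isolate as the main obstacle is precisely what the paper proves constructively: it builds a bipartite multigraph with $n_i=M/\ell_i$ parallel edges, colors link $i$ with the interval $[1:n_i]$ or $[\Delta-n_i+1:\Delta]$ according to the parity of $i$ (adjacent links anchored at opposite ends of $[1:\Delta]$, which is exactly your greedy interval decomposition), and groups colors into at most $N+1$ states via Algorithm~\ref{algo:interval_assign}; your sketch of this step is the right one and would go through. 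Where your packaging genuinely differs is that you would run the interval argument directly on real-valued $x_i\in[0,1]$, which buys you a cleaner treatment: the paper instead works with integer capacities, scales to rationals, and then needs Lemma~\ref{lem:real_to_rat} and a limiting argument to cover real $\ell_i$, all of which your continuous formulation would bypass. Two small points to nail down if you execute this: (i) you must actually invoke (or reprove) the fundamental-cut reduction of Section~\ref{sec:cuts} to justify that the objective is $\min_{i\in[1:N+1]}x_i\ell_i$ and not a more complicated cut functional; (ii) your worry about the endpoint links is resolved by noting $x_j\ell_j\le\ell_j$ can never be tighter than $\ell_i\ell_{i+1}/(\ell_i+\ell_{i+1})<\min\{\ell_i,\ell_{i+1}\}$ for a pair containing $j$, so the min legitimately ranges over $i\in[1:N]$ only.
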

\noindent {\bf Converse.} It is not hard to argue that the right-hand side of~\eqref{eq:closed_form} is an upper bound on $\msf{C}_\mcal{L}$. 
This can be seen by assuming that, for a given $i \in [1:N]$, node $i-1$ perfectly cooperates with node $0$ and similarly node $i+1$ perfectly cooperates with node $N+1$. 
Clearly, the capacity of this new line network is an upper bound on $\msf{C}_\mcal{L}$ and it has an approximate capacity equal to $\frac{\ell_i \ell_{i+1}}{\ell_i + \ell_{i+1}}$. 
Since this is true for all $i \in [1:N]$, then $\msf{C}_\mcal{L}$ is less than or equal to the right-hand side of~\eqref{eq:closed_form}.
The heart of the proof is thus to prove the achievability of~\eqref{eq:closed_form}.

Before we delve into the proof of the achievability in Theorem~\ref{thm:closed_form},
we highlight the following remarks to motivate the need to search for a simple schedule for the line network.

\begin{rem}
{\rm {\bf Are two active states sufficient for approximate capacity characterization?}
Consider a line network with one relay.
For this network, the schedule that achieves the approximate capacity
has only two active states, which activate the links alternatively.
Intuitively, one might think that this would extend to general number of relays.
For example, for a network with $N=3$, can we achieve the approximate capacity by only considering the listen/transmit states $s_1 = 010,\ s_2 = 101$? 
Surprisingly, the answer to this question is negative as we illustrate through the following example with $N=3$ and
\begin{align}
\label{eq:RunEx}
\ell_1 = 2R,\ \ell_2 = 2R,\  \ell_3 =3R,\ \ell_4 = R,
\end{align}
where $R > 0$.
By considering only the two aforementioned states, we can achieve a rate of $2$/$3 R$. 
However, in Section~\ref{sec:algo}, we consider this network as our running example and show that using $N+1 =4$ states, we can achieve $3$/$4R$.}
\end{rem}

\begin{rem}
{\rm {\bf Can we a priori limit our search over a polynomial number of states?}
For the Full-Duplex (FD) line network, we can a priori limit our search for the minimum cut over $N+1$ cuts (instead of $2^N$).
This reduction in the number of cuts is also possible for the HD line network as we prove in the next section. 
This fact raises the question whether we can also a priori reduce the search space for the active states to a polynomial set (instead of $2^N$). 
This is not possible as we state in the theorem below, which is proved in Appendix~\ref{app:ProofofNPhard}. 
\begin{thm}
\label{thm:NPhard}
With only the knowledge that relays are arranged in a line, the cardinality of the smallest search space of states over which a schedule optimal for approximate capacity can be found is $\Omega(2^{N/3})$.
\end{thm}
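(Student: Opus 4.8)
The statement claims a lower bound of $\Omega(2^{N/3})$ on the size of \emph{any} state set guaranteed to contain an optimal-for-approximate-capacity schedule, when the only structural information available is that the network is a line. The natural strategy is an adversary/counting argument: exhibit a large family of line networks such that (i) each network in the family has a \emph{unique} optimal set of active states (or, more weakly, the supports of the optimal schedules across the family are ``spread out''), and (ii) the union of these required states forces any universal state set to be exponentially large. First I would pick a parameter $k = N/3$ and use the three-hop building block that already appears in the running example (Remark~2): a triple of links with capacities tuned so that the minimum in~\eqref{eq:closed_form} is achieved at a specific coordinate and the achieving schedule must alternate the two critical links. Chaining $k$ such blocks along the line, and choosing the per-block capacities from a set of ``incomparable'' configurations, one gets roughly $c^k = c^{N/3}$ distinct networks, each pinning down a distinct combination of active states; the block structure is exactly what produces the $2^{N/3}$ rather than $2^N$.

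The key steps, in order, would be: (1) isolate a single three-relay gadget and characterize, using the closed form in Theorem~\ref{thm:closed_form} together with the complementary-slackness / LP-duality structure of~\eqref{eq:ApproxCap}, precisely which listen/transmit states must receive positive mass in \emph{every} optimal schedule for that gadget — the claim being that this forced support depends on the relative sizes of the $\ell_i$ within the block and that there are $\geq 2$ genuinely different forced supports obtainable by different choices; (2) show that when $k$ such gadgets are concatenated, the forced supports compose independently, so the set of states that must appear for a given global network is essentially a product over blocks; (3) build the family: for each of the $2^k$ choices of ``block type'' one obtains a network whose every optimal schedule uses a state not used by any optimal schedule of the other networks, hence a universal state set must contain at least one representative state per network, giving cardinality $\Omega(2^k) = \Omega(2^{N/3})$; (4) verify the gap/approximate-capacity bookkeeping — i.e.\ that we are really talking about the quantity $\mathsf{C}_\mcal{L}$ and not something off by more than the allowed $O(N)$ slack — which is immediate since we work directly with~\eqref{eq:ApproxCap} rather than with true capacity.

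The main obstacle I anticipate is step (1)–(2): proving that the forced active states of the composite network genuinely \emph{decouple} across blocks, i.e.\ that one cannot ``recycle'' a state designed for block $j$ to serve block $j'$ and thereby shrink the universal set below $2^{N/3}$. A listen/transmit state is a global binary vector $s \in [0{:}1]^N$, so a single state touches all blocks at once; I would need to argue that the restrictions imposed by optimality in different blocks are mutually exclusive on the overlapping coordinates — for instance by making each block impose, via its minimum-cut structure, a constraint of the form ``on this block the state must look like one of two specific patterns,'' and checking that the two patterns of block $j$ are incompatible with those of block $j'$ unless the global state already encodes the correct block type everywhere. Making this rigorous amounts to analyzing the min-cut condition in~\eqref{eq:ApproxCap_fixedschedule} restricted to a sub-line, which the simplifications developed in Section~\ref{sec:cuts} should make tractable; once decoupling is established, the counting in steps (3)–(4) is routine.
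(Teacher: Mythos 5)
Your overall strategy --- an adversarial family of line networks, each of which forces a distinct state into any universal search space --- is the right one, and it is the same skeleton the paper uses. The execution differs: the paper first argues via LP duality (the dual of \eqref{eq:ApproxCap} restricted to the $N+1$ fundamental cuts) that the states which can be necessary correspond to the \emph{fundamental maximum cuts} of the FD line, then shows (Lemma~\ref{lem:Equivalence}) that these are in bijection with the \emph{primitive punctured} subsets of $[1:N+1]$, i.e., maximal sets of pairwise non-consecutive link indices, and finally counts those via the recurrence $T(N)=T(N-2)+T(N-3)$ (Lemma~\ref{lem:num_prim_sets}), which gives $\Omega(2^{N/3})$. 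Your block-product count of $2^{N/3}$ is a legitimate alternative way to lower-bound the same quantity, since picking one of two non-adjacent links per triple of positions yields exponentially many non-consecutive link subsets.

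The genuine gap is exactly the step you flag and defer: proving, for each network in the family, that some specific state is \emph{necessary} (receives positive, indeed dominant, mass in every optimal schedule) and that these necessary states are pairwise distinct across the family. Your proposed route (per-gadget complementary slackness plus a decoupling argument for the min-cut condition on sub-lines) is not carried out, and it is the hard part: a state is a single global vector in $[0:1]^N$, so nothing a priori prevents one state from simultaneously serving the ``forced patterns'' of two different block types, which is precisely the recycling you worry about. The paper closes this with a construction that makes the issue disappear: given a maximal pairwise non-adjacent set $\mathcal{B}$ of link indices, set $\ell_i=1$ for $i\in\mathcal{B}$ and $\ell_i$ arbitrarily large otherwise. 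Because the links in $\mathcal{B}$ are non-adjacent, there is exactly one state activating all of them simultaneously, and putting weight $1$ on that state is the unique optimal schedule; maximality of $\mathcal{B}$ ensures that distinct such sets force distinct states. If you adopt this capacity assignment inside your blocks --- and verify that the union of your per-block choices is a \emph{maximal} non-consecutive set, which is where non-maximal choices would break uniqueness --- your steps (1)--(3) become immediate and the $\Omega(2^{N/3})$ count follows.
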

}
\end{rem}

\begin{rem}{\rm Theorem~\ref{thm:closed_form} has two appealing consequences:

{1)} The capacity of the line network with $N$ relays can be computed in linear time in $N$. 
This improves on the result in~\cite{EtkinIT2014}, where the approximate capacity can be found in polynomial time (but not linear in the worst case) by solving a linear program with $O(N)$ variables.
%

{2)}  The approximate capacity in Theorem~\ref{thm:closed_form} can be computed in a distributive way as follows. Each relay $i \in [1:N]$ computes the quantity
\begin{align*}
m_i= \min \left \{ \frac{\ell_{i}\ell_{i+1}}{\ell_{i}+\ell_{i+1}}, m_{i-1}\right \},
\end{align*}
where $m_{0}=\infty$, and sends it to relay $i+1$.
With this, at the end we have $m_N={\mathsf{C}}_\mcal{L}$.
In other words, for approximate capacity computation, it is only required that each relay knows the capacity of the incoming and outcoming links.
}
\end{rem}


\section{Fundamental Cuts in HD Line Networks}
\label{sec:cuts}
In this section, we prove that in a Gaussian HD line network, we can compute the approximate capacity $\msf{C}_\mcal{L}$ in~\eqref{eq:ApproxCap} by considering only $N+1$ cuts, which are the same that one would need to consider if the network was operating in FD.

For the Gaussian line network $\mcal{L}$,
when all the $N$ relays operate in FD, the FD capacity is given by
\begin{align}
\label{eq:ApproxCapFD}
{\mathsf{C}}^{{\rm{FD}}}_\mcal{L} 
= \min_{\mcal{A}\subseteq [1:N]} \sum_{\substack{i \in \{N+1\}\cup \mcal{A},\\i-1 \in \{0\}\cup \mcal{A}^c}} \ell_{i}
= \min_{i \in [1:N+1]} \left \{ \ell_{i}\right \},		
\end{align}
that is, without {\it explicit} knowledge of the values of $\ell_{i}$ or their ordering, the number of cuts over which we need to optimize (see ${\mathsf{C}}^{{\rm{FD}}}_\mcal{L}$ in~\eqref{eq:ApproxCapFD}) is $N+1$.
We refer to these cuts as {\it fundamental}. Let $\mathscr{F}$ denote the set of these fundamental cuts (which are of the form $\mathcal{A}=[i:N], i \in [1:N]$ or $\mathcal{A}=\emptyset$), then for any cut $\mcal{A}$ of the network, there exists a fundamental cut $F(\mcal{A}) \in \mathscr{F}$ such that:
\begin{align}
\label{fundamental_cuts_eq}
\sum_{\substack{i \in \{N+1\}\cup F(\mcal{A}),\\i-1 \in \{0\}\cup F(\mcal{A)}^c}} \ell_{i} \leq 
\sum_{\substack{i \in \{N+1\}\cup \mcal{A},\\i-1 \in \{0\}\cup \mcal{A}^c}} \ell_{i}.
\end{align}
Furthermore, the function $F(\cdot)$ does not depend on the values of $\ell_i$. 
We next prove that the fundamental cuts in HD equal those in~\eqref{eq:ApproxCapFD} for FD. 
Consider a fixed 
schedule $\lambda$. Then, by using~\eqref{fundamental_cuts_eq} for the inner summation in~\eqref{eq:ApproxCap_fixedschedule}, for each $s\in[0:1]^N$ we have
\begin{align*}
\sum_{\substack{i \in  \{N+1\} \cup  F(\mathcal{A}),  \\ i-1 \in \{0\} \cup F(\mathcal{A})^c  } } \ell_{i,s}'
\leq \sum_{\substack{i \in  \{N+1\} \cup  \mathcal{A},  \\ i-1 \in \{0\} \cup \mathcal{A}^c  } } \ell_{i,s}' 
.
\end{align*}
Thus, we can simplify \eqref{eq:ApproxCap_fixedschedule} as
\begin{align}
\label{eq:simplified_expression_cap}
{{\mathsf{C}}^{\lambda}_\mcal{L}} &= \min_{\mathcal{A} \subseteq [1:N]} {\sum_{s \in [0:1]^N}} \lambda_s 
\sum_{\substack{i \in  \{N+1\} \cup \mathcal{A}  \\ i-1 \in \{0\} \cup \mathcal{A}^c  } } \ell_{i,s}' \nonumber\\
&= \min_{\mathcal{A} \in \mathscr{F}} {\sum_{s \in [0:1]^N}} \lambda_s
\sum_{\substack{i \in  \{N+1\} \cup \mathcal{A}  \\ i-1 \in \{0\} \cup \mathcal{A}^c  } } \ell_{i,s}' \nonumber\\
&=\min_{i \in [1:N+1]} \left( \sum_{\substack{s \in \mcal{S}_i}} \lambda_s \right) \ell_{i},
\end{align}
where
\begin{align*}
\mcal{S}_i =& \{s \in [0:1]^N | i \in \{N+1\} \cup \mcal{R}_s ,\ i-1 \in \{0\} \cup \mcal{T}_s \}.
\end{align*}
The set $\mcal{S}_i \subseteq [0:1]^N$ represents the collection of states that activate the $i$-th link. 
For illustration consider a network with $N = 3$. We have
\begin{align*}
    &\mcal{S}_1 = \{000,\ 001,\ 010,\ 011\},& & \mcal{S}_2 = \{100,\ 101\},  \\
    &\mcal{S}_3 = \{010,\ 110\},& & \mcal{S}_4 = \{001,\ 011,\ 101,\ 111\}.
\end{align*}
Using the same arguments as in~\eqref{eq:simplified_expression_cap}, we can similarly simplify the expression of $\msf{C}_\mcal{L}$ in~\eqref{eq:ApproxCap}. 
Thus, the result presented in this section explicitly provides the $N+1$ cuts (out of the $2^N$ possible ones) over which it is sufficient to minimize in order to obtain $\msf{C}_\mcal{L}$ in~\eqref{eq:ApproxCap}.

\section{Finding a Simple Schedule Optimal for Approximate Capacity}
\label{sec:algo}
In this section, we design a polynomial-time algorithm that finds a simple schedule, which achieves the approximate capacity of the
$N$-relay Gaussian HD line network.
The algorithm leverages similarities between network states in HD and edge coloring in a graph. 
In particular, an edge coloring assigns colors to edges in a graph such that no two adjacent edges are colored with the same color. Similarly in HD, a network state cannot be a receiver and a  transmitter simultaneously. Thus, if we assign one color to all activated links (viewed as edges) in a network state, this does not violate the rules of edge coloring in a graph. 
In what follows, we first explain how the algorithm makes use of these similarities assuming that the link capacities $\ell_i$ are all integers and later in the section, we show how the algorithm extends to rational and real values of $\ell_i$.

\subsection{An Algorithm for Networks with Integer Link Capacities}
\label{subsec:algo_integer}
%
%
%
%
We highlight the algorithm procedure in the following main steps and provide intuitions for each step.
As a {\it running example} to illustrate the different steps we consider the line network $\mathcal{L}$ with $N=3$ relays described in~\eqref{eq:RunEx} with $R = 1$.

\noindent{\bf{Step 1.}}
Let $M$ be a common multiple of the link capacities $\ell_{i}$.
For the line network $\mcal{L}$ we construct an associated graph $G_\mcal{L}$ where: (i) the set of nodes is the same as in the network $\mcal{L}$ and (ii)
each link with capacity $\ell_i$ in $\mcal{L}$ is replaced by $n_i$ parallel edges, where
\begin{align}
\label{eq:n_i}
n_i = \frac{M}{\ell_{i}},\quad \forall i \in [1:N+1].
\end{align}
Clearly, computing $M$ and $n_i$ requires $O(N)$ operations.
The main motivation behind this step is that from~\eqref{eq:simplified_expression_cap}, it is not difficult to see that a good schedule would try to assign more weights $\lambda_s$ to a link with a smaller capacity. \
Hence, if we treat edge colors as equally weighted, a link with a smaller capacity should get more colors. 
Thus, the main idea above is to assign $n_i$ adjacent edges inversely proportional to $\ell_i$.

\noindent {\it Running Example.}
We have
\begin{align*}
M = 6 \quad
\text{and}\ \ 
n_1 = 3, \ \
n_2 = 3, \ \
n_3 = 2, \ \
n_4 = 6.
\end{align*}

\noindent{\bf{Step 2.}}
In this step, our goal is to edge color the graph $G_\mcal{L}$. By noting that $G_\mcal{L}$ is a bipartite graph, we know that an optimal coloring can be performed with $\Delta$ colors, where $\Delta$ is the maximum node degree and is equal to
\begin{align}
\label{eq:degree}
\Delta = \max_{i \in [1:N]} \left\{ n_i + n_{i+1}\right\}.
\end{align}
In particular,
we define our coloring by the interval of colors $\mcal{C}_i \subset [1:\Delta]$ that are assigned to the $n_i$ edges that connect node $i-1$ to node $i$ such that $|\mcal{C}_i | = n_i$. 
Specifically, we assign $\mcal{C}_i$ for $i \in [1:N{+}1]$ as
\begin{align}
\mcal{C}_i = 
\begin{cases}
[1:n_i],  & i\ \text{ even}\\
[\Delta{-n_i+}1:\Delta], & i\ \text{ odd}.
\end{cases}
\end{align}
The complexity of this step is $O(N)$, since for each $i \in [1:N+1]$ we only compute two numbers that define the interval $\mcal{C}_i$, that is the two limit points $\mcal{C}_i^{(\ell)}$ and $\mcal{C}_i^{(r)}$ of the interval, i.e., $\mcal{C}_i=[\mcal{C}_i^{(\ell)}:\mcal{C}_i^{(r)}]$.

\noindent {\it Running example.}
Since we have $\Delta=8$, then the assigned color intervals are
\begin{align*}
\mcal{C}_1 = [6:8], \ \
\mcal{C}_2 = [1:3], \ \
\mcal{C}_3 = [7:8], \ \
\mcal{C}_4 = [1:6].
\end{align*}
\noindent{\bf{Step 3.}}
From the previous step, we have $\Delta$ colors each of which corresponds to a network state running for $1/\Delta$ fraction of time. 
However, some of these colors can represent the same operation states. 
For instance, in our {\it running example}, the colors 7 and 8 appear in both $\mcal{C}_1$ and $\mcal{C}_3$ (and nowhere else). 
Therefore, we can group the time fractions of colors 7 and 8 together, since they operate the network in the same way. 
To perform this color grouping, we run an iterative algorithm over the color intervals $\mcal{C}_i, i \in[1:N+1]$ constructed in the previous step, which outputs a schedule for the network. 
The algorithm pseudocode is shown in Algorithm~\ref{algo:interval_assign} and can be summarized as follows:

\smallskip

\noindent {\bf{1)}} We first find a descendingly ordered set of colors $p_u$ at which the network state changes. The network state changes whenever an interval $\mcal{C}_i$ begins or ends. 
Therefore in this step, we sort the unique elements of an array $p$ that contains the $\mcal{C}_i^{(\ell)}$ and $\mcal{C}_i^{(r)}+1$ for all $i \in [1:N+1]$.
Since there are $N+1$ different $\mcal{C}_i$ then this operation takes $O(N\log N)$ time.
At this point, it is worth noting that for odd $i$, we have $\mcal{C}_i^{(r)} = \Delta$ while for even $i$, we have $\mcal{C}_i^{(\ell)} = 1$. 
Thus,  the descendingly ordered array $p_u$ has at most $N+2$ unique elements. 

\noindent {\it Running example.} We have ${p}_u = \{9,\ 7,\ 6,\ 4,\ 1\}$.

\smallskip

\noindent {\bf{2)}} Next, we go through the array $p_u$ to construct the group of colors $I_j$. Each set $I_j$ represents a network state and all colors in a set $I_j$ operate the network in the same way.
        To do this we compute the endpoints $I_j^{(\ell)},\ I_j^{(r)}$ for each $I_j$. The fraction of time the network operates in the state represented by $I_j$ is stored in the vector $w$ and is calculated as $|I_j| / \Delta$.
        
\smallskip

\noindent {\bf{3)}} For each $I_j$ constructed, the algorithm performs a loop of $N$ iterations in order to determine the state of each node in this particular network configuration state and records this in a row of the matrix $\Lambda$. 
Thus, the active states are represented by rows of $\Lambda$. The algorithm finally outputs the variables $\Lambda$ and $w$. The complexity of steps 2 and 3 is $O(N^2)$.

\noindent {\it Running example.}
Algorithm~\ref{algo:interval_assign} outputs
\begin{align*}
    &I_1 = [7:8],\ \ \ \Lambda(1,:) = 010,\ \ \ w(1) = 2/8,\\
    &I_2 = [6:6],\ \ \ \Lambda(2,:) = 001,\ \ \ w(2) = 1/8,\\
    &I_3 = [4:5],\ \ \ \Lambda(3,:) = 111,\ \ \ w(3) = 2/8,\\
    &I_4 = [1:3],\ \ \ \Lambda(4,:) = 101,\ \ \ w(4) = 3/8.
\end{align*}
Finally, we note that since $p_u$ has at most $N+2$ terms, then the number of states output by the algorithm is at  most $N+1$, i.e., the schedule is simple as also stated in Theorem~\ref{thm:closed_form}.
%
\begin{algorithm}
\caption{Grouping Edge Colors}
\label{algo:interval_assign}
 {
 \begin{algorithmic}
     \State \textbf{Input} $N, {\Delta}, (\mcal{C}_i^{(\ell)}, \mcal{C}_i^{(r)}), \forall i \in [1:N+1]$
     \State \textbf{Output} $\Lambda, w$
   \ForEach {$i \in [1:N+1]$}
   \State $p(i) \leftarrow \mcal{C}_i^{(\ell)}, \quad p(i{+}N{+}1) \leftarrow  \mcal{C}_i^{(r)}+1 $
        
\EndFor
\State $p_u \leftarrow UniqueDescSort(p)$
\ForEach {$j \in [1: length(p_u)-1]$}
\State $I_j^{(r)} \leftarrow p_u(j)-1,\quad I_j^{(\ell)} \leftarrow p_u(j+1)$
\State $w(j) \leftarrow (I_j^{(r)}- I_j^{(\ell)} + 1)/\Delta, \quad state \leftarrow 1$
\ForEach {$i \in [1:N+1]$}
\If {  $[I_j^{(\ell)}:I_j^{(r)}] \subseteq \mcal{C}_i$}
\State $state \leftarrow 0$
\If {$i < N+1$} $\Lambda(j,i) \leftarrow 0$ \EndIf
\If {$i > 1$} $\Lambda(j,i-1) \leftarrow 1$ \EndIf
\ElsIf {$i<N+1$}
\State $\Lambda(j,i) \leftarrow state$
\EndIf
\EndFor
\EndFor
 \end{algorithmic}
 }
 \end{algorithm}

\subsection{Rate Achieved by the Schedule}
\label{subsection:closedform}
In the previous subsection we developed an algorithm that outputs a simple schedule for the $N$-relay Gaussian HD line network $\mcal{L}$. We now derive the rate that the constructed schedule achieves. 
Let $\lambda$ be the schedule output by Algorithm~\ref{algo:interval_assign}, then we have the following relation $\forall i \in [1:N+1]$
\begin{align}
\label{eq:sum_lambda_ni_Delta}
\sum_{s \in \mathcal{S}_i} \lambda_s= \sum_{\substack{j \in [1:K],\\ \Lambda(j,:) \in \mathcal{S}_i}} w(j)
= \sum_{\substack{j \in [1:K],\\ \Lambda(j,:) \in \mathcal{S}_i}} \frac{|I_j|}{\Delta} = \frac{n_i}{\Delta},
\end{align}
where $K$ is the number of network configuration states output by Algorithm~\ref{algo:interval_assign} and $\mathcal{S}_i$ is the set of states which make the link of capacity $\ell_{i}$ active and is defined in~\eqref{eq:simplified_expression_cap}. 
From~\eqref{eq:simplified_expression_cap},~\eqref{eq:n_i} and~\eqref{eq:sum_lambda_ni_Delta}, we can simplify the achievable rate $ {\mathsf{C}}^{\lambda}_\mcal{L}$ as
\begin{align}\label{eq:rate_achieved_1}
{\mathsf{C}}^{\lambda}_\mcal{L} &= \min_{i \in [1:N+1]} \left( \sum_{\substack{s \in \mcal{S}_i}} \lambda_s \right) \ell_{i} 
\stackrel{\eqref{eq:sum_lambda_ni_Delta}}= \min_{i \in [1:N+1]} \frac{n_i}{\Delta} \ell_{i} \stackrel{\eqref{eq:n_i}}= \frac{M}{\Delta}.
\end{align}
From the definition of maximum degree $\Delta$ in~\eqref{eq:degree}, we obtain
\begin{align}
\label{eq:Delta_M}
\frac{\Delta}{M} &= \frac{1}{M}\max_{i \in [1:N]} \left \{ 
n_i+n_{i+1}\right \} 
{=} \max_{i \in [1:N]}  \left \{ \frac{1}{\ell_{i}} {+} \frac{1}{\ell_{i+1}}
\right \}.
\end{align}
Finally, by substituting~\eqref{eq:Delta_M} into~\eqref{eq:rate_achieved_1} we obtain that the simple schedule $\lambda$ constructed from 
our proposed polynomial-time algorithm
achieves the right-hand side of~\eqref{eq:closed_form}.
This concludes the proof of Theorem~\ref{thm:closed_form} when the link capacities are integers.
\subsection{Extension to Real Link Valued Capacities}
\label{sub:IrrLinkCap}
The extension of the algorithm in Section \ref{subsec:algo_integer} to networks with rational link capacities is straightforward, by simply multiplying each of the link capacities with a common multiple of the denominators.
In this subsection, we focus on how the algorithm can be used to find a schedule for the line network $\mathcal{L}$ with real link capacities such that the rate achieved by this schedule is at most a constant gap $\varepsilon$ away from $\mathsf{C}_{\mathcal{L}}$.

For a fixed $\varepsilon >0 $, let $\mathcal{L}_{q,\varepsilon}$ be a line network with rational link capacities $q_{i}$ such that
\begin{align}
\label{eq:rat_is_dense}
\forall i \in[1:N+1],\quad  q_{i} \in \mathbb{Q},\  \ell_{i} - \varepsilon \leq q_{i} \leq \ell_{i}.
\end{align}
Such $\mathcal{L}_{q,\varepsilon}$ always exists (but it is not unique) since the set of rationals $\mathbb{Q}$ is dense in $\mathbb{R}$. 
We now relate $\msf{C}_{\mcal{L}_{q,\varepsilon}}$ and $\msf{C}_{\mcal{L}}$ by appealing to the lemma below, which we prove in Appendix~\ref{app:ProofLemmaRealtoRat}.
\begin{lem}
    Let $\mathcal{L}$ be a line network with real link capacities $\ell_{i}$, then $\forall \varepsilon >0$, we have
\begin{align}
\mathsf{C}_{\mathcal{L}_{q,\varepsilon}} 
\leq \mathsf{C}^{\lambda^{\star q}}_{\mathcal{L}} 
\leq \mathsf{C}_{\mathcal{L}}
\leq \mathsf{C}_{\mathcal{L}_{q,\varepsilon}} + \varepsilon ,
    \label{lem:real_to_rat_equation}
\end{align}
where the line network $\mathcal{L}_{q,\varepsilon}$ is constructed as in~\eqref{eq:rat_is_dense} and $\lambda^{\star {q}}$ is an optimal schedule of the line network $\mathcal{L}_{q,\varepsilon}$.
    \label{lem:real_to_rat}
\end{lem}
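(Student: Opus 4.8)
The plan is to derive the four-term chain in~\eqref{lem:real_to_rat_equation} from the simplified per-schedule rate formula~\eqref{eq:simplified_expression_cap}, i.e. $\mathsf{C}^{\lambda}_{\mathcal{L}} = \min_{i\in[1:N+1]}\big(\sum_{s\in\mathcal{S}_i}\lambda_s\big)\ell_{i}$, together with two structural observations: the index sets $\mathcal{S}_i$ and the feasible schedule set $\Lambda$ depend only on the line topology (the number of relays $N$), not on the link-capacity values. Consequently $\mathcal{L}$ and $\mathcal{L}_{q,\varepsilon}$ share the same $\mathcal{S}_i$'s and the same $\Lambda$, and the two networks differ only through the coefficients $\ell_i$ versus $q_i$; in particular $\lambda^{\star q}$ is automatically a valid schedule for $\mathcal{L}$ as well.

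For the leftmost inequality I would use monotonicity of $\mathsf{C}^{\lambda}$ in the link capacities for a \emph{fixed} schedule: since $q_i\le\ell_i$ by~\eqref{eq:rat_is_dense} and $\sum_{s\in\mathcal{S}_i}\lambda_s\ge 0$, each term obeys $\big(\sum_{s\in\mathcal{S}_i}\lambda_s\big)q_i\le\big(\sum_{s\in\mathcal{S}_i}\lambda_s\big)\ell_i$, hence $\mathsf{C}^{\lambda}_{\mathcal{L}_{q,\varepsilon}}\le\mathsf{C}^{\lambda}_{\mathcal{L}}$ for every $\lambda\in\Lambda$; applying this with $\lambda=\lambda^{\star q}$ and using $\mathsf{C}_{\mathcal{L}_{q,\varepsilon}}=\mathsf{C}^{\lambda^{\star q}}_{\mathcal{L}_{q,\varepsilon}}$ (optimality of $\lambda^{\star q}$ for $\mathcal{L}_{q,\varepsilon}$) gives $\mathsf{C}_{\mathcal{L}_{q,\varepsilon}}\le\mathsf{C}^{\lambda^{\star q}}_{\mathcal{L}}$. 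The middle inequality $\mathsf{C}^{\lambda^{\star q}}_{\mathcal{L}}\le\mathsf{C}_{\mathcal{L}}$ is immediate from $\mathsf{C}_{\mathcal{L}}=\max_{\lambda\in\Lambda}\mathsf{C}^{\lambda}_{\mathcal{L}}$, i.e. the remark following~\eqref{eq:ApproxCap_fixedschedule}.

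The only inequality requiring a short computation is the rightmost one, $\mathsf{C}_{\mathcal{L}}\le\mathsf{C}_{\mathcal{L}_{q,\varepsilon}}+\varepsilon$. I would fix a schedule $\lambda^{\star}$ optimal for $\mathcal{L}$ (the maximum in~\eqref{eq:ApproxCap} is attained, since $\Lambda$ is compact and $\lambda\mapsto\mathsf{C}^{\lambda}_{\mathcal{L}}$ is continuous, being a minimum of finitely many linear maps). For each $i$, since $\mathcal{S}_i\subseteq[0:1]^N$ and $\|\lambda^{\star}\|_1=1$ we have $\sum_{s\in\mathcal{S}_i}\lambda^{\star}_s\le 1$, while $0\le\ell_i-q_i\le\varepsilon$ by~\eqref{eq:rat_is_dense}; hence $\big(\sum_{s\in\mathcal{S}_i}\lambda^{\star}_s\big)\ell_i\le\big(\sum_{s\in\mathcal{S}_i}\lambda^{\star}_s\big)q_i+\varepsilon$. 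Taking $\min_{i\in[1:N+1]}$ on both sides, and using the elementary fact that $\min_i a_i\le\min_i b_i+\varepsilon$ whenever $a_i\le b_i+\varepsilon$ for all $i$, I obtain $\mathsf{C}_{\mathcal{L}}=\mathsf{C}^{\lambda^{\star}}_{\mathcal{L}}\le\mathsf{C}^{\lambda^{\star}}_{\mathcal{L}_{q,\varepsilon}}+\varepsilon\le\mathsf{C}_{\mathcal{L}_{q,\varepsilon}}+\varepsilon$, the last step using $\mathsf{C}_{\mathcal{L}_{q,\varepsilon}}=\max_{\lambda}\mathsf{C}^{\lambda}_{\mathcal{L}_{q,\varepsilon}}$. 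Chaining the three inequalities proves the lemma. No real obstacle is anticipated; the only care needed is to keep track of which network each $\mathsf{C}^{\lambda}$ refers to and to carry out the two $\min$ manipulations correctly (and, if one wishes to avoid invoking compactness, the optimal $\lambda^{\star}$ can be replaced by a $\delta$-optimal schedule with $\delta\downarrow 0$).
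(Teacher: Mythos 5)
Your proposal is correct and follows essentially the same route as the paper's proof in Appendix~\ref{app:ProofLemmaRealtoRat}: the middle inequality from the definition of $\mathsf{C}_{\mathcal{L}}$ as a maximum over schedules, the first from $q_i\le\ell_i$ applied termwise to~\eqref{eq:simplified_expression_cap} with $\lambda=\lambda^{\star q}$, and the third from $\ell_i-q_i\le\varepsilon$ together with $\sum_{s\in\mathcal{S}_i}\lambda_s\le 1$ before taking the minimum over $i$. Your explicit remarks that $\mathcal{S}_i$ and $\Lambda$ are topology-dependent only and that the maximum over $\Lambda$ is attained are minor added points of rigor, not a different argument.
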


The result in Lemma~\ref{lem:real_to_rat} has the following implications:
\begin{enumerate}
\item
The statement of Lemma~\ref{lem:real_to_rat} directly implies that
$
\left | \mathsf{C}_{\mathcal{L}} - \mathsf{C}_{\mathcal{L}}^{\lambda^{\star q}} \right | \leq  \varepsilon.
$
Since this holds for all $\varepsilon > 0$, then we can use the algorithm in Section \ref{subsec:algo_integer} on $\mathcal{L}_{q,\varepsilon}$ to get a simple schedule $\lambda^{\star q,\varepsilon}$ that achieves a rate $\mathsf{C}_{\mathcal{L}}^{\lambda^{\star q}}$ that is at most $\varepsilon$ away from $\mathsf{C}_{\mathcal{L}}$.
%
%
\item 
From Lemma~\ref{lem:real_to_rat}, we have that
\begin{align*}
{ \lim_{\varepsilon \to 0} \mathsf{C}_{\mathcal{L}_{q,\varepsilon}} \leq \mathsf{C}_\mathcal{L} \leq \lim_{\varepsilon \to 0} \left[ \mathsf{C}_{\mathcal{L}_{q,\varepsilon}} + \varepsilon \right].}
\end{align*}
 Additionally, from Section~\ref{subsection:closedform}, we know that for the network $\mathcal{L}_{q,\varepsilon}$ with rational link capacities $q_i$, we have
\begin{align*}
 \lim_{\varepsilon \to 0} \mathsf{C}_{\mathcal{L}_{q,\varepsilon}}  &=\lim_{\varepsilon \to 0}  \min_{j \in [1:N]} \left \{ \frac{q_{j}\ q_{j+1}}{q_{j}+q_{j+1}} \right \}\\ &= \min_{j \in [1:N]} \left \{ \frac{\ell_{j}\ \ell_{j+1}}{\ell_{j} + \ell_{j+1}} \right \}.
\end{align*}
These two observations imply that, for any line network $\mathcal{L}$ with real link capacities, we have
\begin{align*}
\mathsf{C}_\mathcal{L} =  \min_{j \in [1:N]} \left \{ \frac{\ell_{j}\ \ell_{j+1}}{\ell_{j}+ \ell_{j+1}} \right \}.
\end{align*}
\end{enumerate}
This concludes the proof of Theorem~\ref{thm:closed_form} for real link capacities.

\section{Conclusion}\label{sec:conc}
In this work we developed a polynomial-time algorithm for finding a simple schedule (one with at most $N+1$ active states) that achieves the approximate capacity of
the $N$-relay Gaussian HD line network. 
We characterized the rate achieved by the constructed schedule in closed form, hence providing a closed-form expression for the approximate capacity of the Gaussian HD line network.
To the best of our knowledge, this is the first work which provides a closed-form expression for the approximate capacity of an HD relay network with general number of relays and designs an efficient algorithm to find 
simple schedules which achieve it.

\appendices
\section{Proof of Lemma\ref{lem:real_to_rat}}
\label{app:ProofLemmaRealtoRat}
The second inequality in~\eqref{lem:real_to_rat_equation} is straightforward from the definition of $\mathsf{C}_{\mathcal{L}}$.
Therefore, we need to prove the first and third inequalities. 
To prove the first inequality note that, from~\eqref{eq:simplified_expression_cap}, we can upperbound $ \mathsf{C}_{\mathcal{L}_{q,\varepsilon}}$ as
\begin{align*}
{ \mathsf{C}_{\mathcal{L}_{q,\varepsilon}}} &= \min_{i \in [1:N+1]} \left( \sum_{s \in \mathcal{S}_i} \lambda^{\star q}_s\right)  q_{i} \\
&\stackrel{{\rm{(a)}}}\leq \min_{i \in [1:N+1]} \left( \sum_{s \in \mathcal{S}_i} \lambda^{\star q}_s\right)  \ell_{i} = \mathsf{C}^{\lambda^{\star q}}_{\mathcal{L}},
\end{align*}
where the inequality in $\rm{(a)}$ follows since, from the construction in~\eqref{eq:rat_is_dense}, $\forall i \in [1:N+1]$, we have $q_{i} \leq \ell_{i}$. 
This proves the first inequality. 
To prove the third inequality, we use the fact that, from the construction in~\eqref{eq:rat_is_dense}, $\forall i \in [1:N+1]$, we have $q_{i} \geq \ell_{i} - \varepsilon$. 
This implies that for any schedule $\lambda$, we have
\begin{align}
{\mathsf{C}^{\lambda}_\mcal{L}} - \varepsilon 
&=
\left[ \min_{i \in [1:N+1]} \left( \sum_{\substack{s \in \mcal{S}_i}} \lambda_s \right) \ell_{i}\right] - \varepsilon \nonumber \\
&\stackrel{{\rm{(a)}}}\leq \min_{i \in [1:N+1]} \left( \sum_{\substack{s \in \mcal{S}_i}} \lambda_s \right) (\ell_{i} - \varepsilon) \nonumber 
\\
 &\leq \min_{i \in [1:N+1]} \left( \sum_{\substack{s \in \mcal{S}_i}} \lambda_s \right) q_{i} = 
{ \mathsf{C}^{\lambda}_{\mcal{L}_{q,\varepsilon}},}
\end{align}
where the inequality in ${\rm{(a)}}$ follows since $\sum_{s \in \mcal{S}_i} \lambda_s \leq 1$. 
By letting $\lambda = \lambda^\star$, with $\lambda^\star$ being an optimal schedule for $\mathcal{L}$, we have
$
 \mathsf{C}_\mcal{L} - \varepsilon \! \leq \! \mathsf{C}^{\lambda^\star }_{\mcal{L}_{q,\varepsilon}}\leq \mathsf{C}_{\mcal{L}_{q,\varepsilon}},
$
which proves the third inequality.

\section{Proof of Theorem \ref{thm:NPhard}}
\label{app:ProofofNPhard}
In this section, we prove the result in Theorem~\ref{thm:NPhard} by proving the following relations in the following subsections.
\begin{enumerate}
    \item  We first prove that the set of fundamental
        \footnote{When states or cuts are referred to as fundamental of a certain type (e.g., maximum, minimum), we mean that they form the smallest set of that type that only depends on the network topology (i.e., relays are arranged in a line) and is independent of the actual values of the point-to-point link capacities.} states in a Gaussian HD line network is equivalent to the set of fundamental maximum cuts in a Gaussian FD line network.
    \item  We next show that the problem of finding the set of fundamental maximum cuts for an $N$-relay Gaussian FD line network is equivalent to the problem of finding subsets of non-consecutive integers in $[1:N]$.
    \item Finally, we show that the number of subsets of non-consecutive integers in $[1:N]$ is exponential in $N$.
\end{enumerate}
\label{app:ProofExpStates}

\subsection{Set of Fundamental Maximum Cuts}
\label{sec:MaxCuts}

In Section~\ref{sec:cuts} we proved that we can compute the approximate capacity $\msf{C}_\mcal{L}$ in~\eqref{eq:ApproxCap} by considering only $N+1$ cuts, which are the same that one would need to consider if the network was operating in FD.
This implies that we can write~\eqref{eq:ApproxCap} as the linear program (LP)
\begin{subequations}
\label{eq:LPsimplified}
\begin{align}
\begin{array}{lll}
    {\mathsf{C}}_{\mcal{L}} = & {\rm{maximize}} & x
\\ &{\rm{subject \ to}} &  \mathbf{1}_{{N+1}} x  \leq \mathbf{A} \lambda
\\ &{\rm{and}} &  \mathbf{1}_{2^N}^T  \lambda  = 1, \ \lambda \geq \mathbf{0}_{2^N}, \ x \geq 0,
\end{array}
\label{eq:polmnuytf}
\end{align}
where $\mathbf{A} \in\mathbb{R}^{(N+1)  \times 2^N}$ has non-negative entries
\begin{align}
\label{eq:MatrA}
[\mathbf{A}]_{i,j} = \ell'_{i,j},
\end{align}
\end{subequations}
where: 
(i) $i \in [1:N+1],\ j \in [1:2^N]$; 
(ii) $\ell'_{i,j}$ is defined in~\eqref{eq:ChannelState}.
Clearly, the LP in~\eqref{eq:LPsimplified} is feasible. 
The dual of the LP in~\eqref{eq:LPsimplified} is given by
\begin{align}
\begin{array}{lll}
    {\mathsf{C}}_\mcal{L} = & {\rm{minimize}} & y
    \\ &{\rm{subject \ to}} &  \mathbf{1}_{2^N} y  \geq \mathbf{A}^T \mathbf{v}
\\ &{\rm{and}} &  \mathbf{1}_{N+1}^T  \mathbf{v}  \geq 1, \ \mathbf{v} \geq \mathbf{0}_{N+1},
\end{array}
\label{eq:dualLP}
\end{align}
where $\mathbf{A}$ is defined in~\eqref{eq:MatrA}. 
Since the LP in~\eqref{eq:dualLP} is a minimization and the entries of $\mathbf{A}$ are non-negative, then it is not hard to see that for all optimal solutions of \eqref{eq:dualLP}, we have $\mathbf{1}_{N+1}^T  \mathbf{v} = 1$.
As a result, an optimal solution of~\eqref{eq:dualLP} is a solution of 
\begin{align}
\begin{array}{lll}
    {\mathsf{C}}_\mcal{L} = & {\rm{minimize}} & y
    \\ &{\rm{subject \ to}} &  \mathbf{1}_{2^N} y  \geq \mathbf{A}^T \mathbf{v}
    \\ &{\rm{and}} &  \mathbf{1}_{N+1}^T  \mathbf{v}  = 1, \ \mathbf{v} \geq \mathbf{0}_{N+1}.
\end{array}
\label{eq:dualLP_new}
\end{align}
Since in the LP in~\eqref{eq:dualLP} we are seeking to minimize the objective function, this implies that at least one of the constraints of the type $\mathbf{1}_{2^N} y  \geq \mathbf{A}^T \mathbf{v}$ (i.e., the maximum) is satisfied with equality.
We can interpret~\eqref{eq:dualLP_new} as the problem of finding the least maximum FD cut among a class of line networks $\mcal{L}_\mathbf{V}$ derived from the original network $\mcal{L}$, where $\mathbf{V} = \{\mathbf{v} \in \mathbb{R}^{N+1}|\ \mathbf{v} \geq \mathbf{0}, \|\mathbf{v}\|_1 =1\}$. 
For each $\mathbf{v} \in \mathbf{V}$, we define a line network $\mcal{L}_\mathbf{v} \in \mcal{L}_{\mathbf{V}}$, where the capacities are modified by $\mathbf{v}$ as $\ell_i^{(v)} = \ell_i v_i$.

Let $\mathscr{F}_\mcal{M}$ be the fundamental set of maximum cuts in a FD line network, i.e., the smallest set of cuts over which we need to search for the maximum cut in FD without explicit knowledge of the values of the link capacities or their ordering. 
It is clear from definition of $\mathscr{F}_\mcal{M}$ that it is also sufficient to find the least maximum FD cut among the class of Gaussian line networks $\mcal{L}_{\mathbf{V}}$.
{With this, the rows of $\mathbf{A}^T$ (constraints in~\eqref{eq:dualLP_new}) corresponding to $\mathscr{F}_\mcal{M}$ are sufficient to find an optimal solution in~\eqref{eq:dualLP_new}.}
As a consequence of strong duality, the dual multipliers (the states $\lambda_s$ in \eqref{eq:LPsimplified}) corresponding to the fundamental maximum cuts in $\mathscr{F}_\mcal{M}$ are also sufficient to find a schedule optimal for approximate capacity. 
We now prove that, without any knowledge of the link capacities, we need to consider the network states associated to every element of $\mathscr{F}_\mcal{M}$, i.e., considering only the network states corresponding to a subset of $\mathscr{F}_\mcal{M}$ is not sufficient to achieve the approximate capacity.
To prove that, it
suffices to provide a network example, where for each $\mcal{A} \in \mathscr{F}_\mcal{M}$ the state $s_\mcal{A}^c = \mathbbm{1}_{\mathcal{A}^c}$ is the unique optimal schedule, i.e., $\lambda_{s_{\mcal{A}^c}}=1$.
For an arbitrary $\mcal{A} \in \mathscr{F}_\mcal{M}$, define the network with the link capacities
\begin{align*}
&\ell_{i} = \left \{
\begin{array}{ll}
    1 & \text{if} \quad i \in \mcal{M}_\mcal{A}
\\
\infty & \text{otherwise}
\end{array}
\right. ,\\
&\text{where}\\
& \mcal{M}_\mcal{A} = \Big\{i \in [1{:}N{+}1] \Big| \Big.\ i \in \mcal{A}\cup \{N{+}1\},\ i-1 \in \mcal{A}^c \cup \{0\} \Big\}.
\end{align*}
From the aforementioned network construction, it is not hard to see that the unique optimal schedule (one for which $C_\mcal{L} = C^{\rm FD}_\mcal{L}$) is $s_{\mcal{A}^c} = \mathbbm{1}_{\mathcal{A}^c}$, i.e., $\lambda_{s_{\mcal{A}^c}}=1$.
Therefore, for this particular network construction, the state $s_{\mcal{A}^c}$ is necessary and hence we cannot further reduce the sufficient set to a subset of $\mathscr{F}_\mcal{M}$, i.e., we need to consider the network states corresponding to every element of $\mathscr{F}_\mcal{M}$.

This result implies that, to find the smallest set of states over which we should search for an optimal schedule for approximate capacity, we should find the set of maximum cuts in FD and then consider their dual multipliers in~\eqref{eq:LPsimplified}.
In what follows, we focus on estimating the cardinality of the set of fundamental maximum cuts $\mathscr{F}_\mcal{M}$ in a FD Gaussian line network, which, as shown above, gives the cardinality of the smallest search space for an optimal schedule.

\subsection{Finding the Set of Possible Maximum Cuts through an Equivalent Problem}
We start by introducing some definitions, which will be extensively used in the rest of this section.

\begin{defin*}
    For a set of consecutive integers $[a{:}b]$, we call $\mathcal{H}$ a ``punctured'' subset of $[a{:}b]$ if $\ \forall i,j \in \mathcal{H}$ with $i\neq j$, we have $|i{-}j| {>} 1$, i.e., $\mathcal{H}$ contains no consecutive integers of $[a{:}b]$.
\end{defin*}

\begin{defin*}
    We call $\mathcal{H}$ a ``primitive punctured'' subset of $[a:b]$ if $\mathcal{H}$ is a punctured subset of $[a:b]$ and $\forall i \in [a:b]\backslash \mathcal{H},$ $\mathcal{H}\cup\{i\}$ is not a punctured set, i.e., $\mcal{H}$ is not a subset of any other punctured subset of $[a:b]$. We denote by $\mathcal{P}(a,b)$ the collection of all primitive punctured subsets of $[a:b]$.
\end{defin*}
We now use the two above definitions to state the following lemma, which is proved in the rest of this section.
\begin{lem}
\label{lem:Equivalence}
The problem of finding the set of possible maximum cuts for a Gaussian FD line network is equivalent to the problem of finding $\mathcal{P}(1,N+1)$, i.e., the collection of primitive punctured subsets of $[1:N+1]$.
\end{lem}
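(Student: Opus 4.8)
The plan is to characterize, in purely combinatorial terms, the cuts $\mathcal{A} \subseteq [1:N]$ that can be the \emph{maximum} FD cut of some line network, and to match this characterization with the primitive punctured subsets of $[1:N+1]$. Recall that a cut $\mathcal{A}$ contributes the sum of $\ell_i$ over the ``active'' indices $i \in \{N+1\}\cup\mathcal{A}$ with $i-1 \in \{0\}\cup\mathcal{A}^c$; call this active set $\mathcal{M}_\mathcal{A} \subseteq [1:N+1]$ (as already introduced in Appendix~\ref{app:ProofExpStates}). The first step is to observe the elementary fact that $\mathcal{M}_\mathcal{A}$ is exactly the set of indices $i$ such that a ``boundary'' of $\mathcal{A}$ lies between node $i-1$ and node $i$, and that as $\mathcal{A}$ ranges over all subsets of $[1:N]$, the family $\{\mathcal{M}_\mathcal{A}\}$ is precisely the family of \emph{punctured} subsets of $[1:N+1]$: indeed consecutive indices $i,i+1$ cannot both be boundaries of a single contiguous-or-not subset in the line (being in $\mathcal{M}$ at $i$ forces $i-1 \notin \mathcal{A}, i \in \mathcal{A}$, and being in $\mathcal{M}$ at $i+1$ forces $i \notin \mathcal{A}$, a contradiction), and conversely every punctured subset $\mathcal{H}$ of $[1:N+1]$ is realized by choosing $\mathcal{A}$ appropriately (walk along the line, toggling membership at each element of $\mathcal{H}$).

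The second step is to identify which punctured subsets can actually arise as the maximum cut. Here I would argue: (i) if $\mathcal{H} = \mathcal{M}_\mathcal{A}$ is \emph{not} primitive, i.e.\ $\mathcal{H} \cup \{i\}$ is still punctured for some $i \notin \mathcal{H}$, then there is another cut $\mathcal{A}'$ with $\mathcal{M}_{\mathcal{A}'} = \mathcal{H} \cup \{i\} \supsetneq \mathcal{M}_\mathcal{A}$, so for \emph{every} choice of link capacities the cut $\mathcal{A}'$ has value at least that of $\mathcal{A}$ (the sum only grows), hence $\mathcal{A}$ is never the unique candidate and can be discarded from the fundamental set; (ii) conversely, if $\mathcal{H} \in \mathcal{P}(1,N+1)$ is primitive, exhibit a capacity assignment for which the corresponding cut is the strict maximum — e.g.\ set $\ell_i = \infty$ (or a large constant) for $i \in \mathcal{H}$ and $\ell_i$ small for $i \notin \mathcal{H}$, so that any cut not containing all of $\mathcal{H}$ in its active set has bounded value while the cut realizing $\mathcal{H}$ has value $|\mathcal{H}|\cdot\infty$; primitivity guarantees no other cut's active set is a superset, so this cut is the unique maximum. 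This is essentially the same ``large-capacity-on-a-prescribed-set'' construction already used in Section~\ref{sec:MaxCuts}, adapted to maximum cuts. Combining (i) and (ii) shows that $\mathscr{F}_\mathcal{M}$ corresponds bijectively to $\mathcal{P}(1,N+1)$.

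The main obstacle I anticipate is being careful about the boundary conventions at nodes $0$ and $N+1$: the source always transmits and the destination always receives, so index $1$ is in $\mathcal{M}_\mathcal{A}$ iff $1 \in \mathcal{A}$, and index $N+1$ is in $\mathcal{M}_\mathcal{A}$ iff $N \in \mathcal{A}^c$, i.e.\ $N \notin \mathcal{A}$. One must check these endpoint cases fit the ``punctured subset of $[1:N+1]$'' description without off-by-one errors, and verify that the realizability map $\mathcal{H} \mapsto \mathcal{A}$ respects them (the set $\mathcal{A} = [i:N]$ or $\emptyset$ style fundamental cuts of Section~\ref{sec:cuts} should reappear here). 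A secondary subtlety is making the notion of ``equivalent problem'' precise: I would state it as a bijection between $\mathscr{F}_\mathcal{M}$ and $\mathcal{P}(1,N+1)$ that is independent of the link-capacity values, so that counting one counts the other — which is exactly what the next subsection needs in order to conclude the $\Omega(2^{N/3})$ bound.
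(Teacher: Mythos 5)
Your proposal is correct and follows essentially the same route as the paper's proof: a value-preserving correspondence between cuts $\mathcal{A}$ and punctured subsets of $[1:N+1]$ via the active set $\mathcal{M}_\mathcal{A}$ (the paper's maps $\mathcal{A}\mapsto\mathcal{B}_\mathcal{A}$ and $\mathcal{B}\mapsto\mathcal{A}_\mathcal{B}$, including the same endpoint/tail bookkeeping you flag), elimination of non-primitive punctured sets by domination, and necessity of each primitive set via an explicit capacity assignment (the paper sets $\ell_i=1$ on the prescribed set and $0$ elsewhere, which is your large-constant/small construction). The only caveat is to use the finite large constant rather than literal $\ell_i=\infty$: with $\infty$, any cut whose active set merely intersects $\mathcal{H}$ would tie at infinite value and the uniqueness argument would break.
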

\begin{proof}
We start by defining two problems, namely $\mathsf{P}_1$ and $\mathsf{P}_2$, which are important for the rest of the proof:
\begin{subequations}
\begin{align}
\mathsf{P}_1 &: \qquad  \max_{\mathcal{A}\subseteq [1:N]} g_{1}(\mathcal{A}) =\sum_{\substack{ i \in \mathcal{A} \cup \{N+1\} \\ i-1 \in \mathcal{A}^c \cup \{0\}}} \ell_{i},\\
\mathsf{P}_2 &:  \qquad \max_{\substack{\mathcal{B}\subseteq [1:N+1]\\ \mathcal{B} \ \text{is punctured}}} g_{2}(\mathcal{B})=\sum_{i \in \mathcal{B}} \ell_{i}.
\label{eq:P2}
\end{align}
\end{subequations}
Note that $\msf{P}_1$ is the problem of finding the maximum FD cut in an $N$-relay Gaussian line network.
To relate the solutions of $\msf{P}_1$ and $\msf{P}_2$, we make use of the following definition.
\begin{defin*}
Given a problem $\mathsf{P}$, we denote with $\text{suf} (\mathsf{P})$ the smallest set of feasible solutions among which an optimal solution can be found for any instance of the problem.
\end{defin*}
The proof is organized as follows:
\begin{enumerate}
\item {\bf {Step 1:}} We prove that $\mathsf{P}_1$ and $\mathsf{P}_2$ are equivalent; as a consequence, there exists a function $f$ such that $ \text{suf}\left(\mathsf{P}_1\right) = f \left( \text{suf} \left(\mathsf{P}_2 \right) \right)$.
\item {\bf {Step 2:}} Next we prove that $\text{suf} \left ( \mathsf{P}_2 \right) \subseteq \mathcal{P}(1,N+1)$, which implies that 
\begin{align*}
 \text{suf} \left ( \mathsf{P}_1 \right) \subseteq f \left(\mathcal{P}(1,N+1) \right).
\end{align*}
\item {\bf {Step 3:}} The previous step implies that the set $\mathcal{M}$ of possible maximum cuts is a subset of $f \left(\mathcal{P}(1,N+1) \right)$. We finally prove that $\mathcal{M}= f \left(\mathcal{P}(1,N+1) \right)$.
\end{enumerate}
Once proved, these steps imply that we can map the problem of finding the set of possible maximum cuts for a Gaussian FD line network to the problem of finding $\mathcal{P}(1,N+1)$. 
We prove these three steps in Appendix~\ref{app:ProofLemmaEq}.
\end{proof}

\noindent{\bf{Example.}}
Consider the Gaussian FD line network with $N=7$. 
To find the set of possible maximum cuts, according to Lemma~\ref{lem:Equivalence}, we need to find the primitive punctured subsets of $[1:8]$.
This turns out to be:
\begin{align*}
    \mathcal{P}(1,8) =& \Big\{ \{1,4,7\}, \{2,4,7\}, \{2,5,7\}, \{2,5,8\}, \{1,3,5,7\},\ \Big.\\
    &\quad \Big. \{1,3,6,8\}, \{1,4,6,8\}, \{2,4,6,8\}, \{1,3,5,8\}\Big\}.
\end{align*}
It turns out that we can retrieve the candidate maximum cuts $\mathcal{A}_i$ from $\mathcal{P}(1,8)$ as follows:
\begin{align*}
\mathcal{A}_i = \mathcal{H}_i \backslash \{8\},\ \mathcal{H}_i \in \mathcal{P}(1,8), \quad \forall i \in \left [ 1: | \mathcal{P}(1,8) | \right ].
\end{align*}
To conclude the proof of Theorem~\ref{thm:NPhard}, we need to understand how the size of $\mcal{P}(1,N+1)$ grows with $N$, which is the goal of the following subsection.

\subsection{The Size of the Collection of Primitive Punctured Subsets}
In this subsection, we prove that the size of the collection of primitive punctured subsets of $[1:N+1]$ grows exponentially in $N$.
In particular, we prove the following lemma.
\begin{lem}
\label{lem:num_prim_sets}
Let $T(N)$ be the number of primitive punctured subsets of $[1:N]$. 
Then, we have the following relation,
\begin{align*}
 T(N) = T(N-2) + T(N-3).
\end{align*}
\end{lem}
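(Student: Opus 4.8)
The plan is to set up a recursion by conditioning on whether the largest element $N$ belongs to a primitive punctured subset $\mathcal{H}$ of $[1:N]$, and to argue that in each case, after deleting a short suffix, the remainder is exactly a primitive punctured subset of a shorter interval $[1:N-2]$ or $[1:N-3]$. Recall that $\mathcal{H}$ being primitive punctured of $[1:N]$ means: (i) no two elements of $\mathcal{H}$ are consecutive, and (ii) $\mathcal{H}$ is maximal with respect to (i), equivalently every $i \in [1:N]\setminus\mathcal{H}$ is adjacent (within distance $1$) to some element of $\mathcal{H}$ — i.e. $\mathcal{H}$ is a maximal independent set of the path $1 - 2 - \cdots - N$. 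The "gap" condition that characterizes maximality is precisely that consecutive chosen elements differ by at most $3$, and the smallest chosen element is at most $2$, and the largest is at least $N-1$.

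First I would make the maximality characterization explicit: $\mathcal{H} = \{a_1 < a_2 < \cdots < a_k\}$ is primitive punctured of $[1:N]$ iff $a_1 \in \{1,2\}$, $a_k \in \{N-1,N\}$, and $a_{j+1} - a_j \in \{2,3\}$ for all $j$. (If some gap were $\ge 4$ we could insert an element; if $a_1 \ge 3$ or $a_k \le N-2$ we could prepend/append.) Then I would split on the value of $a_k$. Case $a_k = N$: then the penultimate element satisfies $a_{k-1} \in \{N-2, N-3\}$. If $a_{k-1} = N-2$, removing $N$ leaves a set whose largest element is $N-2$, which is a primitive punctured subset of $[1:N-2]$, and this correspondence is a bijection onto such subsets (add $N$ back); this contributes $T(N-2)$. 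If $a_{k-1} = N-3$, removing $N$ leaves largest element $N-3$; I need to check this is a primitive punctured subset of $[1:N-3]$ — the gap condition and left-endpoint condition are inherited, and the new right endpoint $N-3$ is the largest of $[1:N-3]$, so yes; this contributes $T(N-3)$. Case $a_k = N-1$: then $N \notin \mathcal{H}$, and $N$ is covered only if $a_k = N-1$ (which holds) — but now I observe $a_{k-1}$ must satisfy $a_{k-1} \in \{N-3, N-4\}$... so I would instead handle $a_k = N-1$ by the same suffix-deletion: if $a_k = N-1$, then $\{a_1,\dots,a_{k-1}\}$ together with endpoint considerations should reduce to the $a_k = N$ cases of $[1:N-1]$, or more cleanly, I would organize the whole argument around deleting the last element and tracking the last gap, giving the two terms $T(N-2)$ and $T(N-3)$ directly.

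The cleanest version I would actually write: given a primitive punctured subset $\mathcal{H}=\{a_1<\cdots<a_k\}$ of $[1:N]$, consider $a_k$. Since $\mathcal{H}$ is maximal, $a_k \ge N-1$. Map $\mathcal{H}$ to $\mathcal{H}' = \mathcal{H}\setminus\{a_k\}$. The element $a_{k-1}$ is the new maximum and, since the old gap $a_k - a_{k-1} \in \{2,3\}$ and $a_k \in \{N-1,N\}$, we get $a_{k-1} \in \{N-4, N-3, N-2\}$; the maximality of $\mathcal{H}'$ as a subset of $[1:a_{k-1}+1]$ then forces, combined with the requirement that removing $a_k$ and re-adding must recover $\mathcal{H}$ uniquely, a clean two-way split. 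I would verify that the fibers of this map partition into those landing in $\mathcal{P}(1,N-2)$ and those in $\mathcal{P}(1,N-3)$, each hit exactly once, yielding $T(N) = T(N-2) + T(N-3)$.

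The main obstacle I anticipate is the bookkeeping at the right endpoint: there are two "positions" for $a_k$ (namely $N-1$ and $N$) and two admissible last gaps ($2$ and $3$), so naively four subcases, and I must show these collapse to exactly two surviving, bijective contributions without double-counting or omission. Getting the maximality/coverage condition stated precisely for the truncated interval — especially making sure the new right endpoint is genuinely "maximal" in $[1:N-2]$ or $[1:N-3]$ and not off by one — is the delicate point; once the endpoint characterization "$a_1\in\{1,2\}$, $a_k\in\{N-1,N\}$, gaps in $\{2,3\}$" is nailed down, the recursion should follow by a short case check.
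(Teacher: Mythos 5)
Your proof is correct and is essentially the paper's argument reflected left-to-right: the paper splits on whether the minimum element is $1$ or $2$ and peels it off (Properties~1--3 in the appendix), whereas you split on whether the maximum is $N$ or $N-1$ and peel that off, and the endpoint/gap characterization you state ($a_1\in\{1,2\}$, $a_k\in\{N-1,N\}$, gaps in $\{2,3\}$) is exactly the content of the paper's Property~1 and its consequences. The four-subcase bookkeeping you worry about resolves cleanly once you tag each set by whether the deleted maximum was $N$ (remainder lands in $\mathcal{P}(1,N-2)$) or $N-1$ (remainder lands in $\mathcal{P}(1,N-3)$) rather than by where the new maximum falls, since a remainder with maximum $N-3$ belongs to both collections and would otherwise appear to be hit twice.
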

The proof of the above lemma can be found in Appendix~\ref{app:ProofSizeExp}.
\begin{rem}
    {\rm
        The result in Lemma~\ref{lem:num_prim_sets} suggests that $T(N)$ grows exponentially fast. This can be proven as follows:
\begin{align*}
    T(N) &= T(N-2) + T(N-3) \\ 
    &\geq\  2\ T(N-3) \geq\ 2^k T(N-3k)\\
    &\geq  \tfrac{T(1)}{2}\ 2^{N/3},\quad \forall N\geq 4.
\end{align*}
This implies that $T(N) = \Omega(2^{N/3})$.
}
\end{rem}
Since the number of candidate active states is equal to the number of candidate maximum cuts in FD (see the discussion in Appendix~\ref{sec:MaxCuts}) and this is equal to the number of primitive punctured subsets of $[1:N+1]$ (see Lemma~\ref{lem:Equivalence}), then the number of candidate active states grows as $\Omega(2^\frac{N}{3})$. This concludes the proof of Theorem~\ref{thm:NPhard}.

\begin{rem}
    {\rm
        Using the recurrence relation in Lemma~\ref{lem:num_prim_sets}, it is not hard to prove that in fact $T(N) = \Theta(\beta^N)$ where $\beta$ is the unique real root of the polynomial $x^3 - x - 1 = 0$. 
}
\end{rem}

\section{Proof of Lemma~\ref{lem:Equivalence}}
\label{app:ProofLemmaEq}
\noindent We here prove each of the three steps in the proof of Lemma~\ref{lem:Equivalence}.
\\
{\bf {Step 1:}} We first start by proving that any feasible solution for $\msf{P}_1$ can be transformed into a feasible solution for $\msf{P}_2$ with the same value for the objective function. i.e., $\forall \mcal{A} \in [1:N]$,
\[
    \exists \text{ punctured } \mcal{B}_\mcal{A} \in [1:N+1],\ \text{s.t.} \ g_1(\mcal{A}) = g_2(\mcal{B}_\mcal{A}).
\]
To show this, for $\mathcal{A} \subseteq [1:N]$, we simply define $\mcal{B}_\mcal{A}$ as
\begin{align}
    \label{eq:Bdef}
    \mathcal{B}_\mcal{A}= \Big\{i \in [1{:}N{+}1] \Big. \Big| i \in \mathcal{A} \cup \{N{+}1\}, i{-}1 \in \mathcal{A}^c \cup \{0\} \Big \}.
\end{align}
It is clear that $\mcal{B}_\mcal{A}$ is a punctured set as $\forall i \in \mcal{B}_\mcal{A},\ i-1 \notin \mcal{B}_\mcal{A}$.
Additionally,~\eqref{eq:Bdef} directly gives us the desired relation as
\begin{align}
    \label{eq:g1_eq_g2}
    g_1 (\mathcal{A}) &= \sum_{\substack{ i \in \mathcal{A} \cup \{N+1\} \\ i-1 \in \mathcal{A}^c \cup \{0\}}} \ell_{i} = \sum_{ i \in \mathcal{B}_\mcal{A} } \ell_{i} = g_2(\mcal{B}_\mcal{A}).
\end{align}
What remains to prove now is that any feasible solution $\mathcal{B}$ for $\mathsf{P}_2$ gives a feasible solution $\mathcal{A}_\mcal{B}$ for $\mathsf{P}_1$ and $g_{1}(\mathcal{A}_\mcal{B})=g_{2}(\mathcal{B})$.
For a punctured subset $\mathcal{B}$ of $[1:N+1]$, let 
\begin{align}
\label{eq:Adef}
\mathcal{A}_\mcal{B}= f_{\mcal{A}\mcal{B}}(\mcal{B}) = \underbrace{\Big \{ i \in [1:N] \Big | \Big.\ i > \sup(\mcal{B}) \Big\}}_{\mcal{A}_{\rm tail}} \cup \underbrace{\mathcal{B} \backslash \{N + 1 \}}_{\mcal{A}_{\rm main}}.
\end{align} 
It is not hard to see that by applying the transformation in~\eqref{eq:Bdef} on $\mcal{A}_\mcal{B}$, we get back $\mcal{B}$, i.e., $\mcal{B}_{\mcal{A}_\mcal{B}} = \mcal{B}$. This is due to the fact that applying~\eqref{eq:Bdef} removes $\mcal{A}_{\rm tail}$ which is composed of a consecutive number of integers while keeping $\mcal{A}_{\rm main}$ which, since $\mathcal{B}$ is punctured, is also punctured.
Given this, we can directly see from~\eqref{eq:g1_eq_g2} that $g_1(\mcal{A}_B) = g_2(\mcal{B}_{\mcal{A}_\mcal{B}}) = g_2(\mcal{B})$.
This concludes the proof of Step 1.

\smallskip
\noindent {\bf {Step 2:}} We prove this step by showing that, if there exists an optimal solution $\mathcal{B}^\star$ of $\mathsf{P}_2$ that is not primitive, then there also exists a primitive punctured set $\mathcal{B}^\prime$ such that
$g_2(\mathcal{B}^\star)=g_2(\mathcal{B}^\prime)$.
Since $\mathcal{B}^\star$ is not a primitive punctured set, then there exists another punctured set $\mathcal{B}^\prime$ such that $\mathcal{B}^\star \subset \mathcal{B}^\prime$ and 
\begin{align*}
g_2(\mathcal{B}^\star) = \sum_{i \in \mathcal{B}^\star} \ell_{i}
\leq
\sum_{i \in \mathcal{B}^\prime} \ell_{i} = g_2(\mathcal{B}^\prime).
\end{align*}
If we take the largest such $\mathcal{B}^\prime$ we end up with a primitive punctured set. 
However, by definition (i.e., since $\mathcal{B}^\star$ is an optimal solution) we have that $\forall \mathcal{B}$ punctured, $g_2(\mathcal{B}) \leq g_2(\mathcal{B}^\star)$. 
This shows that $g_2(\mathcal{B}^\star)=g_2(\mathcal{B}^\prime)$ and therefore, $\text{suf} \left ( \mathsf{P}_2 \right) \subseteq \mathcal{P}(1,N+1)$.  This concludes the proof of Step 2.

\smallskip
\noindent{\bf {Step 3:}} In the first two steps, we proved that $\msf{P}_1$ and $\msf{P}_2$ are equivalent and that $\text{suf} \left ( \mathsf{P}_2 \right) \subseteq \mathcal{P}(1,N+1)$. 
This implies that $\text{suf} \left ( \mathsf{P}_1 \right) \subseteq f_\mcal{AB}\left(\mathcal{P}(1,N+1)\right)$, where $f_\mcal{AB}(\cdot)$ is defined in~\eqref{eq:Adef}.
We here prove that $\text{suf} \left ( \mathsf{P}_1 \right) = f_\mcal{AB}\left(\mathcal{P}(1,N+1)\right)$.
Consider an arbitrary set $\mathcal{A} \in f_{\mcal{AB}} \left(\mathcal{P}(1,N+1) \right)$. 
To prove that $\mcal{A} \in \text{suf} \left ( \mathsf{P}_1 \right)$, it suffices to provide a network (an instance of $\msf{P}_1$) for which $\mcal{A}$ is the unique maximizer of $\msf{P}_1$.
Towards this end, for the selected $\mcal{A}$, we define $\mathcal{B}_\mcal{A}$ as in~\eqref{eq:Bdef}.
We know that $\mathcal{B}_\mcal{A}$ is a primitive punctured set and $g_1(\mathcal{A})=g_2(\mathcal{B}_\mcal{A})$.
Now consider the network with link capacities
\begin{align*}
\ell_{i} = \left \{
\begin{array}{ll}
    1 & \text{if} \ i \in \mathcal{B}_\mcal{A}
\\
0 & \text{otherwise}
\end{array}
\right. .
\end{align*}
For this network, it is not hard to see that $g_2 (\mathcal{B})= |\mathcal{B} \cap \mathcal{B}_\mcal{A}|$, for any punctured set $\mathcal{B}$.
We now want to show that $\forall \mathcal{A}^\prime \in f_{\mathcal{AB}} \left(\mathcal{P}(1,N+1) \right)\backslash \mcal{A}$, we have $g_1 (\mathcal{A}^\prime) < g_1 (\mathcal{A})$.
Let $\mathcal{B}_{\mcal{A}^\prime}$ be defined as in~\eqref{eq:Bdef}.
Again, from the proof of the previous steps the set $\mathcal{B}_{\mcal{A}^\prime}$ is primitive punctured and $g_1(\mathcal{A}^\prime)=g_2(\mathcal{B}_{\mcal{A}^\prime})$.
Moreover, since $\mathcal{B}_{\mcal{A}^\prime}$ and $\mathcal{B}_\mcal{A}$ are both primitive we have that $\mathcal{B}_{\mcal{A}^\prime} \cap \mathcal{B}_{\mathcal{A}} \subset \mathcal{B}_\mcal{A}$. 
Thus, we obtain
\begin{align*}
    g_2(\mathcal{B}_{\mcal{A}^\prime})=&|\mathcal{B}_{\mcal{A}^\prime} \cap \mathcal{B}_\mcal{A}| < |\mathcal{B}_\mcal{A}| = g_2 (\mathcal{B}_\mcal{A})\\
    &\implies g_1 (\mathcal{A}^\prime)<g_1 (\mathcal{A}).
\end{align*}
Since this is true for any arbitrary $\mcal{A} \in f_\mcal{AB}\left(\mathcal{P}(1,N+1) \right)$, then it is true $\forall \mcal{A} \in f_{\mathcal{AB}} \left(\mathcal{P}(1,N+1) \right)$.
This implies that each element in $f_\mcal{AB} \left(\mathcal{P}(1,N+1) \right) $ is a unique maximum cut for some network construction. 
Therefore, without any information about the link capacities $\ell_{i}$, we cannot further reduce the set of possible maximum cuts and thus we have $\text{suf} \left ( \mathsf{P}_1 \right) = f_\mcal{AB} \left(\mathcal{P}(1,N+1) \right)$. This concludes the proof of Step 3 and hence the proof of Lemma~\ref{lem:Equivalence}.

\section{Proof of Lemma~\ref{lem:num_prim_sets}}
\label{app:ProofSizeExp}
To compute the size of $\mathcal{P}(a,b)$, it is helpful to first prove some properties of $\mathcal{P}(a,b)$ and primitive punctured subsets that will help throughout the proof.

\begin{prope}\label{prope:min_prim}
Let $\mathcal{H}$ be a primitive punctured subset of $[a:b]$, then $\min \{\mathcal{H} \} \leq a+1$.
\end{prope}
\begin{proof}
We prove this result by contradiction. 
Assume that for some primitive punctured set $\mathcal{H}$, we have $\min \{\mathcal{H}\} \geq a+2$.
This implies that $\mathcal{H} \subset [a+2:b]$.
Let $\hat{\mathcal{H}} = \mathcal{H} \cup \{a\}$. 
Since $\mathcal{H}$ is a punctured set, then $\hat{\mathcal{H}}$ is also a punctured set because $\forall i \in \mathcal{H},\ |a-i| > 1$.
But since $\mathcal{H} \subset \hat{\mathcal{H}}$, then $\mathcal{H}$ is not a primitive punctured set, which is a contradiction.
\end{proof}
Property~\ref{prope:min_prim} implies that, for a primitive punctured subset of $[a:b]$, the minimum element is either $a$ or $a+1$.
Therefore, we can write $\mathcal{P}(a,b)$ as
\begin{align*}
    \mathcal{P}(a,b) = \mathcal{P}_1(a,b)\ \uplus\ \mathcal{P}_2(a,b),
\end{align*}
where $\mathcal{P}_1(a,b)$ (respectively, $\mathcal{P}_2(a,b)$) is the collection of primitive punctured sets with minimum element $a$ (respectively, $a+1$).
Clearly, $\mathcal{P}_1$ and $\mathcal{P}_2$ are disjoint (we use $\uplus$ to indicate that the union is over disjoint sets).
Next, we prove some properties of $\mathcal{P}_1(a,b)$ and $\mathcal{P}_2(a,b)$.

\begin{prope}\label{prope:P_2}
$\mathcal{P}_2(a,b) = \mathcal{P}_1(a+1,b)$.
\end{prope}
\begin{proof}
    Let $\mathcal{H}$ be a primitive punctured subset of $[a+1:b]$ that contains the element $a+1$. $\mathcal{H}$ is also a primitive punctured subset of $[a:b]$. This follows since we cannot add $\{a\}$ to $\mathcal{H}$ to get a larger set of non-consecutive elements. 
Therefore, $\mathcal{H} \in \mathcal{P}_1(a+1,b) \implies \mathcal{H} \in \mathcal{P}_2(a,b)$.
    The reverse implication is straightforward since, by definition, $\mathcal{P}_2(a,b)$ is a primitive punctured subset which contains the element $a+1$.
\end{proof}
For the next property, we need to define a new operation on the collection of sets. 
For a collection of sets $\mathcal{Q}$, define the operation $\{i\}\sqcup \mathcal{Q} = \left\{ \{i\} \cup \mathcal{H} \left | \right. \mathcal{H} \in \mathcal{Q} \right\}$. 
We then have the following property of $\mathcal{P}_1(a,b)$.
\begin{prope}\label{prope:P_1}
    $\mathcal{P}_1(a,b) = \{a\} \sqcup \mathcal{P}(a+2,b)$.
\end{prope}
\begin{proof}
Let $\mathcal{H}$ be a primitive punctured subset of $[a+2:b]$ and define $\hat{\mathcal{H}} = \{a\} \cup \mathcal{H}$.
    Since $\mathcal{H}$ is a primitive punctured subset of $[a+2:b]$, this means that $\nexists i \in [a+2:b]\backslash \mathcal{H}$ such that $\{i\}\cup \mathcal{H}$ is a punctured sequence of $[a+2:b]$. 
    This implies that $\nexists i \in [a:b]\backslash [\mathcal{H} \cup \{i\}]$ such that $\{i\}\cup \hat{\mathcal{H}}$ is a punctured sequence of $[a:b]$. Therefore $\hat{\mathcal{H}}$ is a primitive punctured sequence of $[a:b]$, i.e., $\hat{\mathcal{H}} \in \mathcal{P}_1(a,b)$.
To prove the reverse, consider $\widetilde{\mathcal{H}} \in \mathcal{P}_1(a,b)$. We need to prove that $\hat{\mathcal{H}} = \widetilde{\mathcal{H}}\backslash\{a\}$ is a primitive punctured subset of $[a+2:b]$. Note that the definition of primitive subset of $[a:b]$ implies that $\forall i \in [a+2:b]\backslash \widetilde{\mathcal{H}}$, $\widetilde{\mathcal{H}}\cup\{i\}$ is not a punctured set. 
    Since $a \not\in [a+2:b]$, this implies that $\forall i \in [a+2:b]\backslash \hat{\mathcal{H}}$,  $\widetilde{\mathcal{H}}\cup\{i\}$ is not a punctured set. 
    Now note that since $\widetilde{\mathcal{H}} \in \mathcal{P}_1(a,b)$ then $a+1 \not\in \widetilde{\mathcal{H}}$. Therefore, $\forall i \in[a+2:b]$ removing the element $a$ from $\widetilde{\mathcal{H}}\cup\{i\}$ does not make it a punctured set.
    We therefore conclude that, $\forall i \in [a+2:b]\backslash \hat{\mathcal{H}}$,  $\hat{\mathcal{H}}\cup\{i\}$ is not a punctured set and as a result $\hat{\mathcal{H}} = \widetilde{\mathcal{H}}\backslash\{a\}$ is a primitive punctured subset of $[a+2:b]$.
\end{proof}

We now have all the necessary tools to prove Lemma~\ref{lem:num_prim_sets}.
We obtain
\begin{align*}
    \mathcal{P}(1,N) &= \mathcal{P}_1(1,N) \uplus \mathcal{P}_2(1,N)\\
    &\stackrel{\rm{(a)}}=\mathcal{P}_1(1,N) \uplus \mathcal{P}_1(2,N) \\
    &\stackrel{\rm{(b)}}= \big[ \{1\} \sqcup \mathcal{P}(3,N)\big]\ \uplus\ \big[\{2\} \sqcup \mathcal{P}(4,N)\big] ,
\end{align*}
where the equality in ${\rm{(a)}}$ follows from Property~\ref{prope:P_2} and the equality in ${\rm{(b)}}$ follows from Property~\ref{prope:P_1}.
Now note that
\begin{align*}
    \big| \left[ \{i\} \sqcup \mathcal{P}(a,N)\right] \big| &= \big|  \mathcal{P}(a,N)\big| =  \big|  \mathcal{P}(1,N-a+1)\big|\\ 
    &= T(N-a+1),
\end{align*}
since the number of sets in each collection remain the same.
Therefore, we have 
\begin{align*}
    T(N) &= |\mathcal{P}(1,N)| \\
    &= \big|  \left[ \{1\} \sqcup \mathcal{P}(3,N)\right]\ \uplus\ \left[\{2\} \sqcup \mathcal{P}(4,N)\right] \big|\\
    &= \big|  \left[ \{1\} \sqcup \mathcal{P}(3,N)\right]\ \big| + \big| \left[\{2\} \sqcup \mathcal{P}(4,N)\right] \big|\\
    &= T(N-2) + T(N-3).
\end{align*}
This concludes the proof of Lemma~\ref{lem:num_prim_sets}.

\section{Half-Duplex Routing is NP-Hard}
\label{HD_NP_hard}

We are given a directed graph $G$ with set of vertices $G(V)$, edges $E(G)$ and a source (S) and destination (D) vertices. For a graph with $N+2$ vertices, we denote the source vertex as $v_0$ and the destination vertex as $v_{N+1}$. 
For each edge $e \in E(G)$, we have an associated edge capacity $c(e) > 0$. 
        For a path $P = v_{k_1}-v_{k_2}-v_{k_3}- \dots - v_{k_{m+1}}$ of length $m$ in $G$, the Half-Duplex capacity is defined as 
        \begin{align}
            \mathsf{C}_P = \min_{i\in[2:m]}\left\{\frac{ c(e_{k_{i-1}k_{i}})\ c(e_{k_{i}k_{i+1}})}{ c(e_{k_{i-1}k_{i}})+ c(e_{k_{i}k_{i+1}}) }  \right\}.
            \label{eq:HD_capacity}
        \end{align}
        An S-D path is a path such that $v_{k_1} = v_0$ and $v_{k_m+1} = v_{N+1}$. The capacity expression in \eqref{eq:HD_capacity} can be regarded as half the minimum harmonic mean of the capacities of each two consecutive edges in the path.

        Our goal in this section, is to prove that the problem of finding the S-D simple path with the best Half-Duplex capacity in a graph is NP-Hard. Towards proving this, we first prove that the related decision problem ``HD-Path'' is NP-complete. 

\begin{defin}[HD-Path problem]
    Given directed graph $G$ and a scalar value $\mathsf{Z}$, does there exist an S-D simple path in $G$ whose Half-Duplex capacity is greater than or equal $\mathsf{Z}$ ?
\end{defin}
Since the decision problem can be reduced in polynomial time to finding the S-D simple path with the best Half-Duplex capacity, then by proving the NP-completeness of the decision problem, we also prove that the search problem is NP-Hard.

The HD-Path problem is NP because given a guess for a path, we can verify in polynomial time whether it is simple (i.e., no repeated vertices) and whether its Half-Duplex capacity is greater than or equal $Z$ by evaluating the expression in \eqref{eq:HD_capacity}.

To prove that the NP-completeness of the HD-Path problem, we are going to show that we can reduce to this decision problem from the classical 3SAT problem which is NP-complete in polynomial time.
For the {\rm 3SAT} problem, we are given a boolean expression $B$ in 3-conjunctive normal form,
\begin{align}
    B =& (p_{11} \vee p_{12} \vee p_{13}) \wedge (p_{21} \vee p_{22} \vee p_{23})\nonumber \\ 
    &\wedge \dots \wedge (p_{m1} \vee p_{m2} \vee p_{m3}).
    \label{eq:3SAT_expr}
\end{align}
$B$ is a conjuction of $m$ clauses $\{C_1,C_2,\dots,C_m\}$ which each are a disjunction of at most three literals.
A literal $p_{ij}$ is either a boolean variable $x_k$ or its negation $\bar{x}_k$.
The boolean expression $B$ is \emph{satisfiable} if the variables $\{x_k\}$ can be assigned boolean values so that $B$ is true.
The 3SAT problem answers the question: \emph{Is the given B satisfiable?}
We next prove the main result in this section through the following Lemma.
\begin{lem}
    There exists a polynomial time reduction from {\rm 3SAT} to the {\rm HD-Path} problem.
\end{lem}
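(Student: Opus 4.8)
The plan is to construct, from a given 3SAT instance $B$ with clauses $C_1,\dots,C_m$ over variables $x_1,\dots,x_n$, a directed graph $G$ together with a threshold $\mathsf{Z}$ such that $G$ admits an S-D simple path of Half-Duplex capacity $\geq \mathsf{Z}$ if and only if $B$ is satisfiable. The key difficulty, which shapes the whole construction, is that the Half-Duplex capacity in~\eqref{eq:HD_capacity} couples \emph{consecutive} edges: a single high-capacity edge is worthless if the edge before or after it is cheap. This ``two-hop'' coupling is exactly what makes the problem harder than ordinary shortest/widest path, and the reduction has to exploit it.

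First I would build a \emph{variable gadget} for each $x_k$: a small diamond-shaped subgraph with a single entry vertex and a single exit vertex and two internal parallel routes, a ``true'' route and a ``false'' route. Traversing the true route corresponds to setting $x_k$ = true, the false route to $x_k$ = false. These gadgets are chained in series $x_1 \to x_2 \to \cdots \to x_n$, so that any S-D simple path through the ``assignment'' portion of the graph picks exactly one route per variable and hence encodes a complete truth assignment. Inside each gadget I would set all edge capacities to some large ``safe'' value $H$ so that, as long as the path stays inside the assignment portion, every pair of consecutive edges contributes harmonic mean $H/2 \geq \mathsf{Z}$ and does not violate the threshold. Second, I would append a \emph{clause-checking} portion after $x_n$: for each clause $C_j$ the path must pass through a clause vertex $u_j$, and $u_j$ can only be reached/left cheaply — i.e.\ via an edge of small capacity $\ell$ — unless the path ``witnesses'' a satisfying literal of $C_j$. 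Concretely, I would add, for each literal occurrence of $x_k$ (resp.\ $\bar x_k$) in $C_j$, a bypass edge incident to $u_j$ whose capacity is $H$, but route it so it is only usable in a simple path if the assignment portion chose the matching route in the gadget for $x_k$ — this is the place where I must be careful, because ``remembering'' the earlier choice in a memoryless path requires routing tricks (e.g.\ making the literal-bypass edge emanate from a vertex on the true-route of the $x_k$ gadget, using the simple-path constraint to prevent inconsistent reuse). I would then set $\mathsf{Z}$ just above $\frac{\ell H}{\ell + H}$ (the harmonic-mean value you are forced to accept when you must use a cheap edge), so that a path of capacity $\geq \mathsf{Z}$ exists precisely when every clause vertex can be traversed via its $H$-capacity bypass, i.e.\ precisely when the chosen assignment satisfies all clauses.

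The correctness argument then splits into the two standard directions. For soundness, given a satisfying assignment, choose the corresponding routes in all variable gadgets, pick for each clause one satisfied literal and route through its $H$-bypass around $u_j$; I would verify this is a simple path (the routing trick must guarantee no vertex repetition) and that every consecutive edge pair has harmonic mean $\geq \mathsf{Z}$. For completeness, given an S-D simple path of capacity $\geq \mathsf{Z}$, I would argue it must traverse each variable gadget via exactly one of the two routes (so it defines an assignment) and must traverse each clause vertex via an $H$-bypass (since any cheap edge incident to $u_j$ would force two consecutive edges of harmonic mean $\leq \frac{\ell H}{\ell+H} < \mathsf{Z}$), and that an $H$-bypass for $C_j$ is available on a simple path only if the encoded assignment satisfies $C_j$. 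Finally I would note the construction has size polynomial in $n+m$ (each gadget is $O(1)$ vertices/edges, there are $O(n+m)$ gadgets and $O(m)$ literal bypasses) and is computable in polynomial time, completing the reduction; since the search version trivially solves the decision version, HD-routing is NP-hard.

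The step I expect to be the main obstacle is the clause-checking gadget: ensuring that the availability of a literal's $H$-bypass around $u_j$ is \emph{faithfully} tied to the earlier route choice using only the simple-path (no repeated vertex) constraint, without accidentally creating shortcuts that let an \emph{unsatisfying} path still achieve capacity $\geq \mathsf{Z}$, and without the literal-bypass edges of different clauses interfering with each other. Getting the capacity bookkeeping exactly right — choosing $H$, $\ell$, and $\mathsf{Z}$ so that the harmonic-mean constraint bites exactly at the cheap clause edges and nowhere in the gadgets or at the gadget junctions — is the delicate quantitative part, but it is routine once the topology of the gadgets is fixed.
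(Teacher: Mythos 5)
Your high-level architecture is genuinely different from the paper's: you build a spine of \emph{variable} gadgets that fixes a truth assignment and then append clause-checking vertices, whereas the paper builds a spine of \emph{clause} gadgets (one parallel branch per literal, so the path selects a witness literal for each clause) and enforces consistency between negated literal occurrences via ``forbidden pairs'' that are merged into shared vertices; the harmonic-mean constraint is then used in exactly one place, to kill one of the two ``crossing'' traversals of a merged vertex (two consecutive edges of capacity $\mathsf{Z}$ give half-harmonic-mean $\mathsf{Z}/2 < \mathsf{Z}$), while the other crossing is excluded by simplicity of the path. Both architectures are in principle viable, and yours uses the capacity threshold in a more natural place (forcing each clause vertex to be traversed via a high-capacity bypass).

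However, there is a genuine gap at exactly the point you flag as the ``main obstacle,'' and the one concrete mechanism you offer for it does not work. You propose to make a literal's bypass edge around $u_j$ ``emanate from a vertex on the true-route of the $x_k$ gadget.'' But in your layout the assignment portion is traversed \emph{before} the clause portion: if the path already went through the true route of $x_k$, it cannot later use an edge leaving a vertex on that route without revisiting that vertex (violating simplicity), and if it did not go through that route, it cannot reach that vertex from the clause portion at all without re-entering the variable spine out of order. So availability of the bypass is not tied to the earlier choice in the direction you need; as stated, no clause bypass is ever usable. The standard fix is the complementary coupling: route the bypass for literal $p$ in $C_j$ through a fresh vertex that is \emph{also} inserted on the route corresponding to $\bar p$, so that choosing $\bar p$ \emph{consumes} that vertex and blocks the bypass, while choosing $p$ leaves it free. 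Once you do this you have essentially rediscovered the shared-vertex/forbidden-pair device on which the paper's entire reduction rests, and you must then redo the soundness argument to check that these shared vertices do not create unintended shortcuts between the two halves of the graph (e.g.\ a path that jumps from the variable spine into a clause bypass and back). Until that gadget is specified and that check is carried out, the reduction is not established; the capacity bookkeeping you describe as ``routine'' is indeed routine, but only after this missing combinatorial core is in place.
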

\begin{proof}
    To prove this statement, we are going to create a sequence of graphs based on the Boolean statement $B$ given to the 3SAT problem.
    In each of these graphs, we shall show that the existence of a satisfying assignment for $B$ is equivalent to a particular feature in the graph.
    Our final step would give us a Half-Duplex network where the feature equivalent to a satisfying assignment of $B$ is to find a Half-Duplex path with approximate capacity greater than or equal to 1.
    Our proof follows four steps of graph constructions as follows:

\medskip

\noindent {\bf 1)}  Assume that the boolean expression $B$ is made of $m$ clauses. 
    For each clause $C_i$ in $B$, construct a gadget digraph $G_i$ with vertices $V(G_i) =\{t_i,v_{i1},v_{i2},v_{i3},r_i\}$ and edges $E(G_i) = \bigcup_{j=1}^3\left\{(t_i,v_{ij}),(v_{ij},r_i)\right\}$.
    Now we join the gadget graphs $G_i$ by adding directed edges $(r_i,t_{i+1})$, $\forall i \in [1:m{-}1]$. 
    Finally, we introduce a source vertex $S$ and destination vertex $D$ and the directed edges $(S,t_1)$ and $(r_m,D)$.
    We denote this new graph construction by $G_B$.

    \smallskip

\noindent    {\bf Example.} 
An illustration of the construction of $G_B$ for the boolean expression $B = (\bar{x}_{1} \vee x_2 \vee x_3)\wedge(x_4 \vee x_1 \vee \bar{x}_2) \wedge (\bar{x}_1 \vee x_3 \vee \bar{x}_5)$ can be seen in Fig.~\ref{fig:graph_3SAT}.

    \begin{figure}[t]
        \centering
        \includegraphics[width=0.48\textwidth]{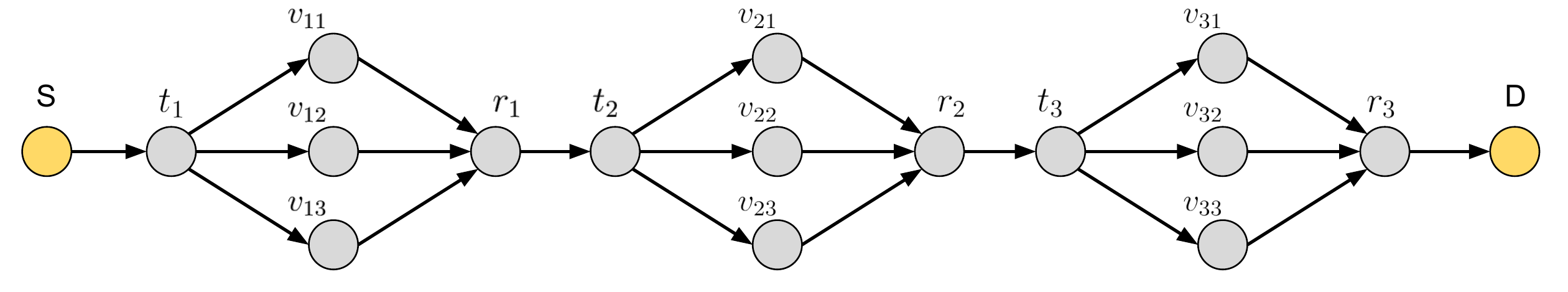}
        \caption{Graph $G_B$ constructed from boolean expression $B = (\bar{x}_{1} \vee x_2 \vee x_3)\wedge(x_4 \vee x_1 \vee \bar{x}_2) \wedge (\bar{x}_1 \vee x_3 \vee \bar{x}_5)$.}
        \label{fig:graph_3SAT}
    \end{figure}
    Note that each vertex $v_{ij}$ in $G_B$ represents a literal $p_{ij}$ in the boolean expression $B$. We call a pair of vertices $(v_{ij},v_{k\ell})$ in $G_B$, with $i < k$, as \emph{forbidden} if $p_{ij} = \widebar{p_{k\ell}}$ in $B$.
    Let $\mcal{F}$ be the set of all such forbidden pairs. 
    Consider an S-D path $P = S - t_1 - v_{1\ell_1} - r_1, t_2 - \dots - v_{m\ell_m} - r_m - D$ in the graph $G_B$ that contains at most one vertex from any forbidden pair in $\mcal{F}$. 
    Using the indexes characterizing the path $P$, if we set the literals $p_{i\ell_i}$ to be true $\forall i \in [1:m]$, then this is a valid assignment (since $P$ avoid all forbidden pairs in $\mcal{F}$). 
    Additionally, since we set one literal to be true in each clause, then this assignment satisfies $B$. Hence the existence of a path $P$ in $G_B$ that avoids forbidden pairs implies that $B$ is satisfiable.
    Similarly, we can show that if $B$ is satisfiable then we can construct a path that avoids forbidden pairs in $G_B$ using any assignment that satisfies $B$.

\medskip

\noindent {\bf 2)}    Next we modify the set of forbidden pairs $\mcal{F}$ and the graph $G_B$ such that each vertex appears at most once in $\mcal{F}$.
    For each vertex $v_{ij}$ that appears in at least one forbidden pair of $\mcal{F}$, define $V_{\mcal{F}}(v_{ij}) =\{v_{i'j'} \in V(G_B) | (v_{ij},v_{i'j'}) \in \mcal{F} \}$ and replace the vertex $v_{ij}$ in ${G}_B$ with a path of the vertices $v_{ij,k\ell}$, $\forall v_{k\ell} \in V_{\mcal{F}}(v_{ij})$. 
    We denote this new graph as $G_B^{\circ}$.
The new set of forbidden pairs $\mcal{F}^\circ$ is defined based on the set $\mcal{F}$ as $\mcal{F}^\circ = \left\{\left. \left(v_{ij,k\ell}, v_{k\ell,ij}\right) \right|\ (v_{ij},v_{k\ell}) \in \mcal{F}  \right\}$.
    Note that for this new set of forbidden pairs, each vertex in $G^\circ_B$ appears in at most one forbidden pair.
    Let $V_{\mcal{F}^\circ}$ be the set of vertices that appear in $\mcal{F}^\circ$.
    Then $\forall v_{ij,kl} \in V_{\mcal{F}^\circ}$, we replace $v_{ij,kl}$ with a path made of three vertices. 
    In particular, for any vertex $v_{ij,k\ell}$, we replace it with a directed path $a_{ij,kl} -  v_{ij,kl} - b_{ij,kl}$. 
    We call this new graph $G_B^{\star}$ and the forbidden pair set $\mcal{F}^\star = \mcal{F}^\circ$.
    The newly introduced vertices $a_{ij,k\ell}$ and $b_{ij,k\ell}$ are called \emph{a-type} and \emph{b-type} vertices respectively.
        \smallskip

        \noindent {\bf Example.} For our running example, the graph $G_B^\star$ and the forbidden pair set $\mcal{F}^\star$ are shown in Fig.~\ref{fig:graph_3SAT_modified1}.

        Similar to our argument earlier for $G_B$, note that a path in $G_B^\star$ that avoid forbidden pairs in $\mcal{F}^\star$ gives a valid satisfying assignment to satisfy the boolean argument $B$.
        In the reverse direction, if we have an assignment that satisfies $B$, then by taking one true literal from each clause $C_i$, we can choose $t_i-r_i$ paths that avoid forbidden pairs. By joining these paths together, we get an S-D path in $G_B^\star$ that avoids forbidden pairs.

\medskip

\noindent {\bf 3)} Our next step is to modify $G^\star$ to incorporate $\mcal{F}^\star$ directly into the structure of the graph. 
For each $(v_{ij,k\ell},v_{k\ell,ij}) \in \mcal{F}^\star$ introduce a new vertex $f_{ij,k\ell}$ to replace $v_{ij,k\ell}$ and $v_{k\ell,ij}$. 
All edges that were incident from (to) $v_{ij,k\ell}, v_{k\ell,ij}$ are now incident from (to) $f_{ij,k\ell}$.
We call these newly introduced vertices as \emph{f-type} vertices and denote this new graph as $G^\bullet_B$.
Note that in $G_B^\bullet$, we now have an incident edge from $a_{ij,k\ell}, a_{k\ell,ij}$ to $f_{ij,k\ell}$ and edges incident from $f_{ij,k\ell}$ to vertices $b_{ij,k\ell}, b_{k\ell,ij}$.
A path in $G^\star_B$ that avoids forbidden pairs in $\mcal{F}^\star$ gives as a path in $G^\bullet_B$ that follows the following rules:
\begin{enumerate}
    \item {\bf Rule 1:} If any \emph{f-type} vertex is visited, then it is visited at most once.
\item {\bf Rule 2:} If an \emph{f-type} vertex is visited then the preceding \emph{a-type} vertex and the following \emph{b-type} vertex both share the same index (i.e., we don't have $a_{ij,k\ell} - f_{ij,k\ell} - b_{k\ell,ij}$ or $a_{k\ell,ij} - f_{ij,k\ell} - b_{ij,k\ell}$ as a subset of our path in $G_B^\bullet$).
\end{enumerate}

It is not hard to see that an S-D path in $G_B^\bullet$ that abides to the two aforementioned rules can be appropriated to give a path that avoids forbidden pairs $\mcal{F}^\star$ in $G_B^\star$. 
This can be particularly seen by treating the subpath ($a_{ij,k\ell} - f_{ij,k\ell} - b_{ij,k\ell}$) in $G_B^\bullet$ as passing through ($a_{ij,k\ell} - v_{ij,k\ell} - b_{ij,k\ell}$) in $G^\star_B$ and similarly ($a_{k\ell,ij} - f_{ij,k\ell} - b_{k\ell,ij}$) for ($a_{k\ell,ij} - v_{k\ell,ij} - b_{k\ell,ij}$).
In other words, a problem of finding a path in $G_B^\star$ that avoids forbidden pairs in $\mcal{F}^\star$ is equivalent to finding a path in $G_B^\bullet$ that satisfies Rule 1 and Rule 2.

        \noindent {\bf Example.} For our running example, the graph $G_B^\bullet$ is shown in Fig.~\ref{fig:graph_3SAT_modified2}.
    \begin{figure}[t]
        \centering
        \includegraphics[width=0.48\textwidth]{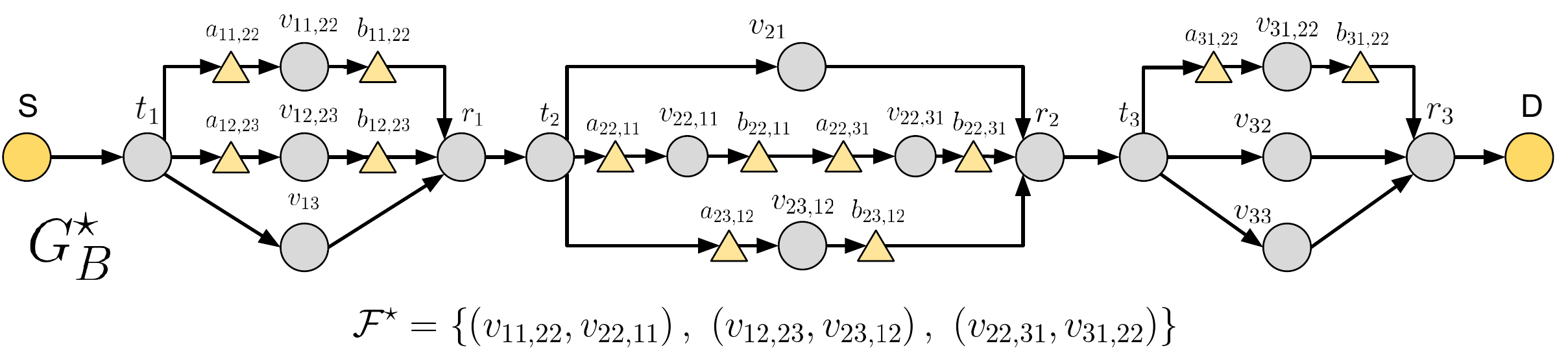}
        \caption{Graph $G^\star_B$ and the forbidden list $\mcal{F}^\star$ constructed from boolean expression $B = (\bar{x}_{1} \vee x_2 \vee x_3)\wedge$ $(x_4 \vee x_1 \vee \bar{x}_2) \wedge (\bar{x}_1 \vee x_3 \vee \bar{x}_5)$.}
        \label{fig:graph_3SAT_modified1}
    \end{figure}

\medskip

\noindent {\bf 4)} Our next step is to modify $G^\bullet_B$ by introducing edge capacities.
For any edge $e \in E(G_B^\bullet)$ that is not incident from or to an \emph{f-type} vertex, we set the capacity of that edge  $c(e) = \infty$.
For an \emph{f-type} vertex $f_{ij,k\ell}$, let $e_1,f_1$ be the edges incident to it from $a_{ij,k\ell}$ and incident from it to $b_{ij,k\ell}$. Similarly, let $e_2,f_2$ be the edges incident from $a_{k\ell,ij}$ and to $b_{k\ell,ij}$.
Then we set the edge capacities of these edges as
\begin{align*}
    c(e_1) = c(f_2) = \mathsf{Z},\quad c(f_1) = c(e_2) = \infty .
\end{align*}

    \begin{figure}[t]
        \centering
        \includegraphics[width=0.48\textwidth]{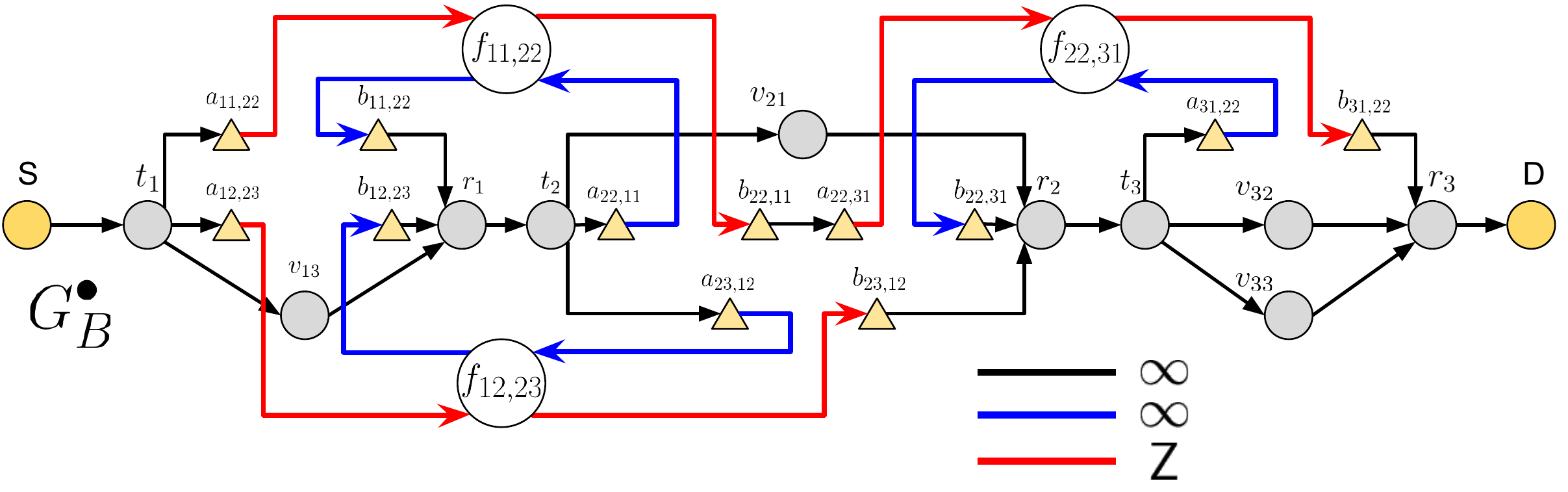}
        \caption{Graph $G^\bullet_B$ for expression $B = (\bar{x}_{1} \vee x_2 \vee x_3)\wedge$ $(x_4 \vee x_1 \vee \bar{x}_2) \wedge (\bar{x}_1 \vee x_3 \vee \bar{x}_5)$ and the associated edge capacities.}
        \label{fig:graph_3SAT_modified2}
    \end{figure}
We now need to show that finding a path satisfying Rules 1 and 2 is equivalent to finding a simple path in $G_B^\bullet$ with Half-Duplex capacity $\geq \mathsf{Z}$.
It is not hard to see that a path that follows Rules 1 and 2 is simple and has a Half-Duplex capacity $\geq \mathsf{Z}$ (by avoiding subpaths $a_{ij,k\ell} - f_{ij,k\ell} - b_{k\ell,ij}$). To prove the equivalence, we need to show that a simple path with capacity greater than or equal $\mathsf{Z}$ satisfies Rules 1 and 2. Towards this end, note that Rule 1 is inheriently satisfied since the path is simple (i.e., it visits any vertex at most once). 
For Rule 2, we argue that both subpaths are avoided by contradiction. 

Assume that the simple path selected contains a subpath of the form $a_{ij,k\ell} - f_{ij,k\ell} - b_{k\ell,ij}$. By our construction of the edge capacities, both the edges $(a_{ij,k\ell},f_{ij,k\ell})$ and $(f_{ij,k\ell},b_{k\ell,ij})$ have a capacity equal to $\mathsf{Z}$. This gives us a contradiction since half the harmonic mean between the capacities of these two consecutive edge is equal to $\mathsf{Z}/2$. Since the Half-Duplex capacity of a path is minimum of half the harmonic means of its consecutive edges, then the selected path cannot have a Half-Duplex capacity greater than or equal $\mathsf{Z}$ which is a contraction.

Now assume that the simple path selected with Half-Duplex capacity $\geq \mathsf{Z}$ contains (for some $i',j',k'$ and $\ell'$) a subpath of the form $a_{k'\ell',i'j'} - f_{i'j',k'\ell'} - b_{i'j',k'\ell'}$. 
Note that as per our construction in graph $G_B^\bullet$, we have that $i' < k'$.
Let $i^\star$ be the smallest index $i'$ for which such a subpath exists in our selected path. 
Since for the subpath in question we have that $i^\star < k'$, then to reach $a_{k'\ell',i^\star j'}$ from $S$, we already visited $r_i^\star$ earlier in the path. However to move from $b_{i^\star j',k'\ell'}$ to $D$ (after the subpath in question), we need to pass through $r_{i^\star}$ once more. This gives a contradiction with the fact that the path is simple.
This proves the fact that finding a path satisfying Rules 1 and 2 is equivalent to finding a simple path in $G_B^\bullet$ with Half-Duplex capacity $\geq \mathsf{Z}$. The second statement is an instant of the HD-Path problem.

Note that in each of the four graph constructions described earlier, we construct one from the other using a polynomial number of operations. 
Thus, this proves by construction that there exists a polynomial reduction from the 3SAT problem to the HD-Path problem.
\end{proof}

\bibliographystyle{IEEEtran}
\bibliography{LineNetwork}

\end{document}